\def \onepagefig{1}
\newcommand{\cldiam}{{\tt CL-DIAM}\xspace}
\newtheorem{definition}{Definition}
\newtheorem{theorem}{Theorem}
\newtheorem{lemma}{Lemma}
\newtheorem{fact}{Fact}
\newtheorem{corollary}{Corollary}
\newcommand{\ord}[1]{$\cdot 10^{#1}$}
\newcommand{\ceil}[1]{\left\lceil #1 \right\rceil}
\newcommand{\BO}[1]{O\left( #1 \right)}       
\newcommand{\BT}[1]{{\Theta}\left(#1\right)}  
\newcommand{\BOM}[1]{{\Omega}\left(#1\right)} 
\newcommand{\dist}{\operatorname{dist}} 
\newcommand{\Let}[2]{#1 $\leftarrow$ #2} 
\def\Dend{\Delta_{\mbox{\scriptsize end}}} 
\def\MR{MR$(M_T,M_L)$}                     
\def\ClustR{R_{\mbox{\scriptsize\tt CL}} (\tau)}
\def\Clust2R{R_{\mbox{\scriptsize\tt CL2}} (\tau)}
\newcommand{\dstepping}{$\Delta$-stepping\xspace}
\newcommand{\affaddr}[1]{{\small #1}}  
\newcommand{\email}[1]{{\tt\small #1}} 
\begin{document}

\pagestyle{plain}  
\title{A Practical Parallel Algorithm for Diameter Approximation of Massive Weighted Graphs}

\author{
Matteo Ceccarello$^1$ \and
Andrea Pietracaprina$^1$ \and
Geppino Pucci$^1$ \and
Eli Upfal$^2$
\\
$^1$\affaddr{Department of Information Engineering, University of Padova,
Padova, Italy}\\
\email{\{ceccarel,capri,geppo\}@dei.unipd.it}
\\
$^2$\affaddr{Department of Computer Science,
Brown University, Providence, RI USA}\\
\email{eli\_upfal@brown.edu}
}

\date{}

\maketitle              

\begin{abstract}
We present a space and time efficient practical parallel algorithm for
approximating the diameter of massive weighted undirected graphs on
distributed platforms supporting a MapReduce-like abstraction.  The
core of the algorithm is a weighted graph decomposition strategy
generating disjoint clusters of bounded weighted
radius. Theoretically, our algorithm uses linear space and yields a
polylogarithmic approximation guarantee; moreover, for important practical
classes of graphs, it runs in a number of rounds asymptotically
smaller than those required by the natural approximation provided by
the state-of-the-art $\Delta$-stepping SSSP algorithm, which is its
only practical linear-space competitor in the aforementioned
computational scenario. We complement our theoretical findings with an
extensive experimental analysis on large benchmark graphs, which
demonstrates that our algorithm attains substantial improvements on a
number of key performance indicators with respect to the
aforementioned competitor, while featuring a similar approximation
ratio (a small constant less than 1.4, as opposed to the
polylogarithmic theoretical bound).

\vspace{1em}
\noindent
{\bf Keywords}
Graph Analytics; Parallel Graph Algorithms; Weighted Graph
Decomposition; Weighted Diameter Approximation; MapReduce

\end{abstract}

\section{Introduction}\label{sec-intro}
The analysis of very large (typically sparse) weighted graphs is
becoming a central tool in numerous domains, including geographic
information systems, social sciences, computational biology, computational
linguistics, semantic search and knowledge discovery, and cyber
security. A fundamental primitive for graph analytics is the
estimation of a graph's diameter, defined as the maximum weighted
distance between nodes in the same connected component.  For this
primitive, which is computationally intensive, resorting to
parallelism is inevitable as the graph size grows.  Unfortunately,
state of the art parallel strategies for diameter estimation are
either space inefficient or incur long critical paths.  These
strategies are thus unfeasible for dealing with huge graphs,
especially on distributed platforms characterized by limited local
memory and high communication costs, (e.g., clusters of
loosely-coupled commodity servers supporting a MapReduce-like
abstraction), which are widely used for big data tasks, and represent
the target computational scenario of this paper.  In this setting, the
challenge is to minimize the number of communication rounds while using
linear aggregate space  and small (i.e., substantially sublinear) space in the individual processors.

\vspace*{0.1cm}
\noindent
\emph{Previous work.}  For general graphs with arbitrary weights, the
only known approach for the exact diameter computation requires the
solution of the All-Pairs Shortest Paths (APSP) problem, but all known
APSP algorithms are characterized by space and/or time requirements
which make them impractical for very large graphs. For unweighted
graphs, the HyperANF algorithm by \cite{BoldiRV11} computes a tight
approximation to the eccentricity of each node (i.e., its maximum
distance from every other node), which in turn yields a tight
approximation to the graph diameter. HyperANF allows an
efficient multithreaded implementation and runs fast on a shared
memory platform. However, it has a critical path equal to the graph
diameter and requires a (small) non constant memory blow-up. 
Also, the algorithm cannot be adapted to deal with weighted graphs.
For these reasons, HyperANF is not a viable competitor 
for the objectives of the present work.

We observe that an upper bound to the diameter within a factor two
can be easily computed by solving an instance of the Single-Source
Shortest Path (SSSP) problem starting from an arbitrary node. A number
of recent works have explored the issue of reducing the approximation
factor by running several SSSP instances from carefully selected nodes
\cite{CrescenziGLM12,ChechikLRSTW14}.  While in practice a few such
instances suffice to obtain a very good approximation, worst-case
approximation guarantees less than two are obtained at the expense of
running a nonconstant number of SSSP instances. In fact, there is
strong theoretical evidence of the difficulty of approximating efficiently the
diameter within a factor less than 3/2 (see \cite{ChechikLRSTW14} and
references therein).

A number of parallel SSSP algorithms have been devised in the last two
decades \cite{Cohen00,klein_randomized_1997,meyer2003delta} (a more
extensive account of the literature can be found in
\cite{meyer2003delta}). To the best of our knowledge, the most
practical one is the \emph{$\Delta$-stepping} PRAM algorithm proposed
in \cite{meyer2003delta}.  The algorithm uses a design parameter
$\Delta$ to trade off parallel time and work. In a nutshell, the role
of $\Delta$ is to stagger the computation of the SSSP tree into zones
of bounded depth $\Delta$. Small values of $\Delta$ reduce work at the
expense of a longer critical path (i.e., parallel time) with the
algorithm approaching the behaviour of Dijkstra's algorithm; the
reverse tradeoff is obtained when $\Delta$ increases, approaching in
this case the behaviour of \mbox{Bellman-Ford's} algorithm
\cite{CormenLRS09}. It can be shown that, irrespective of $\Delta$, if
linear space is required, then the algorithm's parallel time is
bounded from below by the unweighted diameter of the input graph. In
fact, in \cite{meyer2003delta} the authors show that smaller parallel
time can be achieved by introducing suitable shortcut edges connecting
all node pairs at distance at most $\Delta$, which, however, have the
potential of resulting into superlinear space complexity, especially
for sparse graphs.

In \cite{CeccarelloPPU15} we devised a parallel $\BO{\log^3
  n}$-approximation algorithm for computing the diameter of unweighted
undirected graphs.  The algorithm first determines a partition of the
graph into $k = o(n)$ disjoint clusters through a decomposition
strategy based on growing the clusters from batches of centers
progressively selected from yet uncovered nodes, which yields a
provably small cluster radius. The diameter of the graph is then
obtained through the diameter of a suitable quotient graph associated
with the decomposition.  The algorithm can be efficiently implemented
in a MapReduce-like environment yielding good approximation quality in
practice. A similar clustering-based approach for diameter estimation
had been introduced in \cite{Meyer08} in the external-memory setting.
We remark that no analytical guarantees would be provided by the
weight-oblivious execution of these algorithms on a weighted graph
since, for a given topology, the system of shortest paths may
radically change once weights are introduced.

\vspace*{0.1cm}
\noindent
\emph{Our contribution.}  In this paper we address the more
challenging scenario of weighted graphs, and devise a diameter
approximation strategy for distributed machines supporting a
MapReduce-like abstraction, with provable theoretical guarantees,
providing extensive experimental evidence of its efficiency and
effectiveness.

Our algorithm adopts a cluster-based strategy similar to the one used
in \cite{CeccarelloPPU15} for the unweighted case.  
The main difficulty in the weighted case stems from the cluster growth
process: in the unweighted case we included all nodes in the frontier
of all active clusters at each round; in the weighted case we need to
avoid traversing heavy edges, which would increase the cluster radius
and have an unpredictable impact on the approximation quality.
%
%
On the other hand, a high rate of cluster growth is crucial
to enforce a small round complexity.  In order to tackle these
conflicting goals, we combine the following three main ingredients:
(1) the progressive cluster-growing strategy of
\cite{CeccarelloPPU15}; (2) an upper bound $\Delta$ on the weight of
the paths along which clusters are grown (an idea inspired by the
$\Delta$-stepping algorithm); and (3) a doubling strategy to guess
a quasi-optimal value of $\Delta$.

Analytically, we prove that with high probability our algorithm
attains an $\BO{\log^3 n}$ approximation ratio using a number of
rounds which is $\BO{\ell_R \log n}$, where $R$ is the maximum cluster
radius and $\ell_R$ is the number of edges required to connect any
pair of nodes at distance at most $R$. Both $R$ and $\ell_R$ are
nonincreasing functions of the number $k$ of clusters. To obtain the
best round complexity, $k$ can be chosen as the maximum value for
which the diameter of the quotient graph can be computed efficiently
in a single processor's local memory in one round. The total memory
space used by the algorithm is linear in the input graph size. 
We also show that on graphs of bounded doubling
dimension~\cite{AbrahamGGM2006} (an important family including, for
example, multidimensional arrays), under random edge weights and for
some positive constants $\delta<\epsilon<1$, if the local memory
available to each processor is $\BT{n^\epsilon}$ then the round
complexity of our algorithm can be made asymptotically smaller than
the unweighted diameter of the graph by a factor $\BT{n^\delta}$,
outperforming by at least that factor the approximation strategy based
on \dstepping, when linear space is required.
%

We complement the analytical results with an extensive set of
experiments on several large benchmark graphs (of up to about one
billion nodes and several billion edges), both real and synthetic,
running on a 16-node cluster where the Spark engine \cite{SPARK} is
employed to provide a MapReduce environment. For all graphs, our
algorithm attains an approximation ratio never exceeding 1.4, which is
far less than the theoretical upper bound and comparable with the one
ensured by the SSSP-based approach. Compared to an implementation of
$\Delta$-stepping in Spark, our algorithm is up to two orders of
magnitude faster. The better performance of our algorithm is also
substantiated by more implementation-independent metrics such
as the number of rounds and the aggregate work counted as number of
node updates and messages generated. We performed scalability tests
which, even in a somewhat small experimental testbed, demonstrate that
our algorithmic strategy has the potential to exploit a much higher
degree of parallelism, thus affording the analysis of much larger
graphs.

\vspace*{0.1cm}
\noindent
\emph{Paper organization.} Section~\ref{sec-prelim} introduces some
basic terminology and defines our reference machine
model. Section~\ref{sec-cluster} illustrates the graph decomposition
at the core of the diameter approximation algorithm, which is
described in the subsequent Section~\ref{sec-diameter}. The
experimental analysis is presented in Section~\ref{sec-experiments}.

\section{Preliminaries}\label{sec-prelim}
Let $G=(V,E,w)$ be a connected undirected weighted graph with $n$
nodes (set $V$), $m$ edges (set $E$), and a function $w$ which assigns
a positive integral weight $w(e)$ to each edge $e \in E$. We make the
reasonable assumption that the edge weights are polynomial in $n$.
(In fact, our results immediately extend to the case of positive
real-valued weights as long as the ratio between maximum and minimum
weight is polynomial in $n$.) The \emph{distance} between two nodes
$u, v \in V$, for short $\dist(u,v)$, is the weight of a
minimum-weight path between $u, v \in V$. The \emph{diameter} of the
graph, which we denote by $\Phi(G)$, is the maximum distance between
any two nodes.
%
%
We remark that while for convenience in the theoretical analysis we
consider only connected graphs, our results extend straightforwardly
to disconnected graphs, by defining the diameter as the largest
distance between any two nodes in the same connected component.

\begin{definition}
For any positive integer $\tau \leq n$, a \emph{$\tau$-clustering} of
$G$ is a partition $C = \{C_1, C_2 \dots, C_{\tau}\}$ of $V$ into
$\tau$ subsets called \emph{clusters}.  Each cluster $C_i$ has a
distinguished node \mbox{$u_i\in C_i$} called \emph{center}, and a
\emph{radius} $r(C_i) = \max_{v \in C_i} \{\dist(u_i,v)\}$.  The
\emph{radius of a $\tau$-clustering} $C$ is $r(C) = \max_{1 \le i \le
  \tau} \{r(C_i)\}$.  Finally, denote by $R_G(\tau)$ the minimum among
the radii of all $\tau$-clusterings of $G$.
\end{definition}

An important metric impacting the complexity of our algorithms is the
number of edges required to connect nodes with minimum-weight
paths. More formally, for a given distance parameter $\Delta$, we
define $\ell_{\Delta}$ to be the minimum value such that for any two
nodes $u,v \in V$ with $\dist(u,v) \leq \Delta$ there is a
minimum-weight path between $u$ and $v$ with at most $\ell_{\Delta}$
edges. It is easy to see that $\ell_{\Delta}$ is a nondecreasing
function of $\Delta$ and that for any constant $c>0$, $\ell_{c\Delta}
= \BT{\ell_{\Delta}}$.

The performance of the algorithms presented in this paper will be
analyzed considering their distributed implementation on the \emph{MR
  model} of \cite{PietracaprinaPRSU12}.  This model provides a
rigorous computational framework based on the popular MapReduce
paradigm \cite{DeanG08}, which is suitable for large-scale data
processing on clusters of loosely-coupled commodity servers. Similar
models have been recently proposed in \cite{KarloffSV10,GoodrichSZ11}.
An MR algorithm executes as a sequence of \emph{rounds} where, in a
round, a multiset $X$ of key-value pairs is transformed into a new
multiset $Y$ of pairs by applying a given reducer function (simply
called \emph{reducer}) independently to each subset of pairs of $X$
having the same key.  The model features two parameters $M_T$ and
$M_L$, where $M_T$ is the total memory available to the computation,
and $M_L$ is the maximum amount of memory locally available to each
reducer. We use \MR\ to denote a given instance of the model.  The
complexity of an \MR\ algorithm is defined as the number of rounds
executed in the worst case, and it is expressed as a function of the
input size and of $M_T$ and $M_L$. In the big data realm, practical
\MR\ algorithms should use total space linear in the input size and
significantly sublinear local space, while minimizing the round
complexity.

The following
fact is proven in \cite{GoodrichSZ11,PietracaprinaPRSU12}.
\begin{fact} \label{prefixsorting} \sloppy The sorting and (segmented)
  prefix-sum primitives for inputs of size $n$ can be performed in
  $\BO{\log_{M_L} n}$ rounds in \MR\ with $M_T=\BT{n}$.
\end{fact}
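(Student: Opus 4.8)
The plan is to realize both primitives as balanced tree computations of depth $\BO{\log_{M_L} n}$, in which the branching factor at each internal node is a suitable polynomial in $M_L$, and in which each level of the tree is implemented in a constant number of MR rounds while respecting the local memory bound $M_L$ and keeping the total memory at $\BT{n}$. I would treat the (segmented) prefix-sum first, since it is the cleaner of the two, and then fit the analysis of sorting to the same template.

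For the prefix-sum of an input $x_1,\dots,x_n$, I would partition the elements into consecutive blocks of size $b=\BT{M_L}$, assigning a common key to each block. In an \emph{up-sweep} phase, one reducer per block computes the block's total (and, incidentally, the internal prefix-sums of its block), reducing the instance to one of size $n/b$; iterating yields a tree of depth $\log_b n = \BO{\log_{M_L} n}$. A symmetric \emph{down-sweep} then pushes into each block the sum of all elements preceding it, which is combined with the precomputed internal prefix-sums to produce the final output. Each phase touches $\BO{\log_{M_L} n}$ levels, each level is a single round handling at most $M_L$ pairs per reducer, and the total data at every level is $\BO{n}$, so both the round and the space bounds follow. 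The \emph{segmented} variant only requires that during the down-sweep no prefix value be propagated across a segment boundary, which is enforced by carrying a segment identifier alongside each partial sum.

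For sorting I would use a recursive, splitter-based (sample-sort) scheme. At each level I draw a random sample, sort it within a single reducer to extract $\BT{\sqrt{M_L}}$ splitters, broadcast the splitters, and route every element to the bucket delimited by consecutive splitters; buckets that still exceed the local memory are recursed upon. Since each level subdivides a subproblem into $\BT{\sqrt{M_L}}$ pieces, the expected subproblem size shrinks by a factor polynomial in $M_L$ per level, so after $\BO{\log_{M_L} n}$ levels every surviving bucket fits in $M_L$ and is sorted locally in one final round; the aggregate data at each level is again $\BO{n}$, giving $M_T=\BT{n}$.

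The hard part will be the sorting analysis rather than its high-level structure: I must show that, with high probability, the random sampling yields splitters that partition every subproblem into \emph{balanced} buckets, none of which exceeds the local memory $M_L$, so that no reducer overflows and the recursion depth stays $\BO{\log_{M_L} n}$ across all levels simultaneously. This is the standard concentration argument for sample-sort, and once the sample size is tuned so that both the sample and the extracted splitters fit within $M_L$, a union bound over the $\BO{n/M_L}$ buckets at each of the $\BO{\log_{M_L} n}$ levels closes the argument; the prefix-sum bounds are then immediate.
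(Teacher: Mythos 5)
The paper offers no proof of this fact: it is stated as imported from \cite{GoodrichSZ11,PietracaprinaPPRSU12-placeholder}\footnote{The paper cites \cite{GoodrichSZ11} and \cite{PietracaprinaPRSU12} and proves nothing internally, so the comparison below is against the standard arguments in those references rather than against a proof in this paper.}, so your sketch is being measured against the cited literature rather than against anything in the text. Your prefix-sum argument is exactly the standard one and is essentially identical to what the references do: an $\BT{M_L}$-ary reduction tree with an up-sweep and a down-sweep, each level realized in $\BO{1}$ rounds with $\BO{n}$ aggregate data, and segment identifiers blocking propagation across boundaries; nothing is missing there. For sorting you take a genuinely different route: \cite{GoodrichSZ11} obtains the $\BO{\log_{M_L} n}$ bound by simulating a deterministic BSP multiway mergesort, whereas your splitter-based sample sort is randomized and delivers the round bound only with high probability, which is formally weaker than the unconditional statement of the fact (though entirely adequate for this paper, whose main results are already ``with high probability''). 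Two details deserve explicit treatment if you pursue the sample-sort route. First, ``broadcast the splitters'' is itself a nontrivial MR operation: replicating $\BT{\sqrt{M_L}}$ splitters to the $\BT{n/M_L}$ reducers costs $\BO{n/\sqrt{M_L}}$ total space and must be done via a replication tree of depth $\BO{\log_{M_L} n}$, which preserves both bounds but should be said. Second, for the buckets to be balanced with high probability you need a per-subproblem sample of size $\BT{\sqrt{M_L}\log n}$ to fit in a single reducer, i.e.\ $\log n \leq \BO{\sqrt{M_L}}$; this holds comfortably in the regime $M_L = \BT{n^{\epsilon}}$ in which the paper actually invokes the fact, but it is a genuine restriction relative to the fully general statement, and it is precisely the restriction that the deterministic merge-based arguments of the cited works avoid.
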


\section{Parallel graph decomposition}\label{sec-cluster}
In this section we present a parallel algorithm to partition a
connected undirected weighted graph $G=(V,E,w)$ into clusters of small
radius. The algorithm generalizes the one presented in
\cite{CeccarelloPPU15} for the unweighted case. Specifically, we grow
clusters in stages, where in each stage a new randomly selected batch
of cluster centers is added to the current clustering and the number
of uncovered nodes halves with respect to the preceding stage. 
The challenge in the weighted case is to perform cluster growth by
exploiting parallelism while, at the same time, limiting the weight of
the edges considered in each growing step: the goal is to avoid
increasing excessively the weighted radius of the clusters, which
influences the approximation quality.
%
%
In other words, unlike the strategy in
\cite{CeccarelloPPU15} for the unweighted case, we cannot afford to
boldly grow a cluster by adding all nodes connected to its frontier
since some of these additions may entail heavy edges.  To tackle this
challenge, we use ideas akin to those employed in the
$\Delta$-stepping parallel SSSP algorithm proposed in
\cite{meyer2003delta}. In particular, we limit a cluster's growth by
imposing a threshold $\Delta$ on its radius.  Unlike the
$\Delta$-stepping algorithm, however, this threshold is not fixed a
priori but automatically tuned to a quasi-optimal value during the
clustering process.

Before presenting the algorithm, we introduce some technical details.
Let $\Delta$ be an integral parameter which we use as a guess for the
radius of the clustering. As in \cite{meyer2003delta}, we call an edge
\emph{light} if its weight is $\leq \Delta$, and \emph{heavy}
otherwise.  With each node $u \in V$ the algorithm maintains a
\emph{state} consisting of two variables $(c_u,d_u)$: initially, $c_u$
is undefined and $d_u = \infty$. Whenever $u$ is assigned to a
cluster, then $c_u$ is set to the cluster center and $d_u$ is set to
an upper bound to $\dist(c_u,u)$. In particular, if $u$ is chosen as a
cluster center, then $c_u=u$ and $d_u=0$. The algorithm repeatedly
applies, in parallel, edge relaxations of the kind used in the
classical Bellman-Ford's algorithm. More precisely, we define the
following \emph{$\Delta$-growing step}: for each node $u$ with $d_u <
\Delta$ and for each light edge $(u,v)$, in parallel, if $d_u+w(u,v)
\leq \Delta$ and $d_v > d_u+w(u,v)$ then the status of $v$ is updated
by setting $d_v = d_u+w(u,v)$, and $c_v = c_u$.  In case more than one
node $u$ can provide an update to the status of $v$, the update that
yields the smallest value of $d_v$ and, secondarily, the one caused by
the node $u$ such that $c_u$ has smallest index, is performed.

Suppose that a sequence of $\Delta$-growing steps is performed
starting from some set of centers $X \subseteq V$.  After the
execution of these steps, we can contract the graph as follows
(Procedure Contract). For each center $c \in X$, all nodes $u \in V$
with $c_u =c$ are removed, except $c$ itself. For each edge $(u,v)$:
if both $c_u$ and $c_v$ are defined, the edge is removed; if both
$c_u$ and $c_v$ are undefined, the edge is left unchanged; and if
$c_u$ is defined and $c_v$ is undefined, the edge is replaced by a new
edge $(c_u,v)$ of weight $w(u,v)$.

Algorithm {\tt CLUSTER}$(G,\tau)$, whose pseudocode is given in
Algorithm~\ref{alg:cluster_growing}, grows clusters progressively in a
number of stages, until all the nodes of the graph are covered.  In
each stage, which corresponds to an iteration of the outer while loop,
a sequence of $\Delta$-growing steps is executed, each with the goal
of including into the current clusters at least half of the nodes that
are still uncovered but are reachable from some cluster through a path
of weight at most $\Delta$.  The set $X$ of centers of the current
clusters includes clusters partially grown in previous stages (if
any), which are now contracted and represented only by their centers
(set $C_i$), and a set of $\BO{\tau \log n}$ centers randomly selected
from the uncovered nodes.  In each stage, geometrically increasing
values of $\Delta$, starting from a suitable initial value, are
guessed until the coverage goal can be attained. (The sequences of
$\Delta$-growing steps for the various guesses of $\Delta$ are
performed in the inner while loop through Procedure {\tt
  PartialGrowth}.) When few nodes are left uncovered, these are added
as singleton clusters and the algorithm terminates.  Observe that at
most $\BO{\log n}$ stages are executed and each contributes an
additive factor $\Delta$ to the clustering radius. We will show below
that the largest guess for $\Delta$ will be $\BO{R_G(\tau)}$, with
high probability.

\begin{algorithm}[t]
\caption{{\tt CLUSTER}$(G,\tau)$}
\label{alg:cluster_growing}
\DontPrintSemicolon
\Let{$\Delta$}{$\min \{w(u,v) \; : \; (u,v) \in E\}$};
\Let{$\gamma$}{$4 \ln 2$}\;
\Let{$C_1$}{$\emptyset$} /*{\it current set of cluster centers} */\;
\Let{$G_1(V_1,E_1)$}{$G(V,E)$}\;
\Let{$i$}{1}\;

\While{$|V_i-C_i| \geq 8\tau \log n$}{

{
Select $v \in V_i-C_i$ as a new center
independently with probability
$\frac{(\gamma \tau \log n)}{|V_i-C_i|}$\;
}
\Let{X}{$C_i \cup \{\mbox{newly selected centers}\}$}\;
\ForEach{$u \in V_i$}{
\lIf{$u \in X$}{\Let{$(c_u,d_u)$}{$(u,0)$}
{\bf else} \Let{$(c_u,d_u)$}{$(\mbox{nil},\infty)$}}
}
\Let{$V'$}{$\emptyset$}\;
\While{$|V'| < |V_i-C_i|/2$}{
PartialGrowth$(G_i,\Delta)$ \;
\Let{$V'$}{$\{u \in V_i-C_i \; : \; d_u \leq \Delta$\}}\;
\lIf{$|V'| < |V_i-C_i|/2$}{\Let{$\Delta$}{$2\Delta$}}
}
Assign each $u \in V'$ to the cluster centered at $c_u$\;
\Let{$G_{i+1}(V_{i+1},E_{i+1})$}{Contract$(G_i)$}\;
\Let{$C_{i+1}$}{$X$}\;
\Let{$i$}{$i+1$}\;
}
Assign each $u \in V_i-C_i$ to a new singleton cluster centered at $u$\;
/* {\it $V_i$ is the final set of cluster centers} */\;
\vspace{7pt}
{\bf Procedure} PartialGrowth$(\bar{G},\Delta)$ \;
\Repeat{\rm (no state is updated) or ($|V'| \geq |V(\bar{G})|/2$)}
{
perform a $\Delta$-growing step on $\bar{G}$\;
\Let{$V'$}{$\{u \in V(\bar{G}) : d_u \leq \Delta\}$}\;
}
\end{algorithm}
Observe that when the algorithm terminates, each node $u \in V$ is
assigned to a cluster centered at some node $c_u$.  Let $\Dend$
denote the value of $\Delta$ at the end of the execution of
{\tt CLUSTER}$(G,\tau)$. In the following lemma, we show that with high
probability $\Dend$ does not exceed $R_G(\tau)$ by more than a 
constant factor.
\begin{lemma} \label{lem-clust2}
$\Dend = \BO{R_G(\tau)}$, with high probability.
\end{lemma}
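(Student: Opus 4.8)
The plan is to show that, with high probability, the guessed threshold never has to be raised once it reaches a constant multiple of $R=R_G(\tau)$. I would fix an optimal $\tau$-clustering $\{C_1^\star,\dots,C_\tau^\star\}$ of $G$, with centers $u_1^\star,\dots,u_\tau^\star$ and radius exactly $R$, and argue that in every stage, as soon as the current value of $\Delta$ satisfies $\Delta\ge 2R$, the $\Delta$-growing steps cover at least half of the still-uncovered nodes, so that the inner \texttt{while} loop exits without doubling $\Delta$. Since $\Delta$ is doubled only upon a coverage failure, and since for a fixed center selection coverage is monotone in $\Delta$, this forces every doubling to happen while $\Delta<2R$, whence the terminal value satisfies $\Dend=\BO{R}$. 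Because at most $\BO{\log n}$ stages are executed, it then suffices to prove the per-stage coverage statement with failure probability $\BO{1/n}$ and take a union bound over the stages.

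For a fixed stage, let $U=V_i-C_i$ be the uncovered set and partition it as $U_j=U\cap C_j^\star$. The geometric core of the argument is the claim that, when $\Delta\ge 2R$, hitting $U_j$ with at least one selected center $x$ forces \emph{all} of $U_j$ to be covered by the end of the stage. For any $y\in U_j$ the triangle inequality in $G$ gives $\dist(x,y)\le\dist(x,u_j^\star)+\dist(u_j^\star,y)\le 2R$, so $G$ has an $x$--$y$ path of weight at most $2R\le\Delta$, every edge of which is light because the weights are positive. I would then check that this bound survives contraction: any such path can be rerouted inside $G_i$ by replacing each maximal stretch of nodes interior to an already-formed cluster with the two rewired boundary edges incident to that cluster's center, an operation that only discards interior weight. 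Hence $\dist_{G_i}(x,y)\le\dist_G(x,y)\le\Delta$ along light edges, and the Bellman--Ford relaxations performed by repeated $\Delta$-growing steps drive $d_y\le\Delta$, so $y$ is covered. Consequently the uncovered residue of the stage is contained in the union of the optimal clusters that the center selection failed to hit.

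For the probabilistic part I would recall that each node of $U$ is selected independently with probability $p=\gamma\tau\log n/|U|$, with $|U|\ge 8\tau\log n$ guaranteed by the outer loop guard and $\gamma=4\ln 2$. Call $C_j^\star$ \emph{large} if $|U_j|\ge |U|/(2\tau)$ and \emph{small} otherwise. The small parts jointly hold at most $\tau\cdot|U|/(2\tau)=|U|/2$ nodes, so even if none of them is hit they contribute at most half of $U$ to the residue. Each large part is missed with probability at most $(1-p)^{|U_j|}\le e^{-p|U|/(2\tau)}=e^{-(\gamma/2)\log n}$, which is polynomially small for the chosen $\gamma$ (indeed at most $1/n^2$ when $\log$ is read as $\log_2$); since there are at most $\tau\le n$ parts, a union bound shows that all large parts are hit except with probability $\BO{1/n}$. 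On that event the residue has size at most $|U|/2$, i.e.\ at least half of $U$ is covered, which is precisely the condition tested by the algorithm. A final union bound over the $\BO{\log n}$ stages yields $\Dend=\BO{R}$ with high probability.

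The step I expect to be the main obstacle is the interaction between the contraction procedure and the fixed optimality reference: one must verify carefully that passing from $G$ to the dynamically contracted $G_i$ never \emph{lengthens} the relevant $x$--$y$ distances but only shortcuts them through cluster centers, so that the bound $2R$ inherited from the optimal clustering of $G$ remains valid on the instance where the growth actually runs, and that the rerouted walks keep using only light edges. The hitting estimate itself is more routine; its one delicate point is the large/small split, which is what upgrades the weak $\BO{1/\log n}$ bound coming from a plain expectation-plus-Markov argument into the high-probability bound made possible by the loop guard $|U|\ge 8\tau\log n$ and the constant $\gamma=4\ln 2$.
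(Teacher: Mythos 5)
Your overall architecture is the same as the paper's: fix an optimal $\tau$-clustering, split its clusters into large and small according to the threshold $|V_i-C_i|/(2\tau)$, show each large cluster is hit by a new center except with probability $\le 1/n^2$ (via $\gamma=4\ln 2$ and the guard $|V_i-C_i|\ge 8\tau\log n$), conclude that $\Delta\ge 2R_G(\tau)$ always suffices to cover half the uncovered nodes so $\Delta$ is never doubled past $4R_G(\tau)$, and union-bound over clusters and stages. The probabilistic half of your argument is correct and matches the paper.

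The gap is exactly where you predicted it: your rerouting argument for the contracted graph does not work, and the intermediate claim $\dist_{G_i}(x,y)\le\dist_G(x,y)$ it is meant to establish can actually fail. Procedure Contract \emph{deletes} every edge both of whose endpoints are covered, so if a covered stretch of the $x$--$y$ path passes through two different previously-formed clusters $A$ and $B$, the boundary edges get rewired to the two distinct centers $c_A$ and $c_B$, but the edge joining the $A$-portion to the $B$-portion disappears and nothing connects $c_A$ to $c_B$ in $G_i$; your rerouted walk is disconnected. In the worst case $x$ and $y$ can end up in different components of $G_i$. The lemma survives because you do not need $y$ to be reached \emph{from the newly selected center $x$}: you only need it to be reached from \emph{some} node of $X=C_i\cup\{\text{new centers}\}$, and PartialGrowth grows clusters from all of $X$ simultaneously. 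The paper's fix is to look at the suffix of the $G$-path starting at its last covered node $u$: that suffix traverses only uncovered nodes, its first edge is rewired by Contract to the center $c_u\in C_i\subseteq X$, and its weight is at most the full path weight $\le 2R_G(\tau)\le\Delta$ with every edge light; hence the relaxations starting from $c_u$ (whose state is $(c_u,0)$) cover $y$. Replacing your rerouting step with this suffix observation closes the gap and recovers the paper's proof.
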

\begin{proof}
Consider an arbitrary iteration of the outer while loop and let
$G_i=(V_i,E_i)$ be the (contracted) graph on which cluster growth is
performed during the iteration. Let $C_i \subseteq V_i$ be the nodes
of $G_i$ representing clusters grown in previous iterations.  Clearly,
we have that $|V_i-C_i| \geq 8\tau \log n$.  Refer to the nodes of
$V_i-C_i$ as \emph{uncovered nodes}.  We now show that, with
probability at least $1-1/n$, in $G_i$ at least half of the uncovered
nodes can be reached by the new centers selected in the iteration or
by nodes of $C_i$ with paths of weight at most $2R_G(\tau)$ traversing
only uncovered nodes.

Let $\bar{C}$ be a $\tau$-clustering of the whole graph with optimal
radius $r(\bar{C}) = R_G(\tau)$. To avoid confusion, we refer to its
clusters as \emph{$\bar{C}$-clusters}, while simply call
\emph{clusters} those grown by our algorithm.  Consider the
$\bar{C}$-clusters that include some uncovered node.  Among these
$\bar{C}$-clusters, those that contain less than $|V_i-C_i|/(2\tau)$
uncovered nodes account for a total of less than { $\tau{|V_i-C_i| / 2
    \tau} = {|V_i-C_i| / 2}$ } such nodes.  We call \emph{large} the
$\bar{C}$-clusters that contain $|V_i-C_i|/(2\tau)$ or more uncovered
nodes. Therefore, large $\bar{C}$-clusters account for more than
$|V_i-C_i|/2$ uncovered nodes altogether.  Let $\bar{C}_\ell$ be any
such large $\bar{C}$-cluster. By the choice of $\gamma$ in the
probability for center selection, we have that with probability 
$\geq (1-1/n^2)$ at least one uncovered node $c' \in \bar{C}_\ell$ is selected
as a new center.  Since, for every uncovered node $v \in
\bar{C}_\ell$, there is a path in $G$ from $c'$ to $v$ through
$\bar{c}_\ell$ of weight $w \leq 2 R_G(\tau)$, 
there must be a path in $G_i$ from from $c'$ to $v$ of weight at most
$w$ in $G_i$. Note that the suffix of this path starting from the last
cluster center has weight $\leq w$ and traverses only
unconvered nodes. The desired property follows by applying the 
union bound over all large $\bar{C}$-clusters. 

Since in PartialGrowth clusters are grown from the newly selected
centers as well as from the nodes of $C_i$, any value $\Delta \geq
2R_G(\tau)$ guarantees that half of the nodes in $V_i-C_i$ are covered
by clusters. Consequently, $\Delta$ can never be doubled beyond
$4R_G(\tau)$.  The lemma follows by applying the union bound over all
iterations.
\end{proof}
The following theorem states the main result of this section.
\begin{theorem}\label{thm:radius_batched}
\sloppy
Let $\tau$ be a positive integer. 
With high probability, {\tt CLUSTER}$(G,\tau)$ returns an $\BO{\tau
  \log^2 n}$-clustering of radius $\BO{R_G(\tau) \log n}$ by performing 
  $\BO{\ell_{R_G(\tau)}\log n }$ $\Delta$-growing steps, with
  $\Delta= \BO{R_G(\tau)} $.
  \end{theorem}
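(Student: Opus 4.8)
The plan is to establish the four assertions of the theorem --- the number of clusters, the clustering radius, the number of $\Delta$-growing steps, and the final value of $\Delta$ --- in turn, taking Lemma~\ref{lem-clust2} as the central ingredient. The bound $\Delta = \BO{R_G(\tau)}$ on the terminal value $\Dend$ is \emph{exactly} the statement of Lemma~\ref{lem-clust2}, so that assertion is immediate; all remaining bounds will be derived from the structure of the outer while loop together with this fact, and the several high-probability events will be combined by a union bound over the $\BO{\log n}$ stages.

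First I would count the stages and clusters. By construction, the inner while loop of each stage does not terminate until $|V'| \geq |V_i - C_i|/2$, so each stage at least halves the number of uncovered nodes; starting from $n$ and stopping once fewer than $8\tau\log n$ remain, this yields $\BO{\log n}$ stages. Within a stage, each uncovered node is selected as a center independently with probability $(\gamma\tau\log n)/|V_i - C_i|$, so the expected number of new centers is $\gamma\tau\log n$; a Chernoff bound shows this count is $\BO{\tau\log n}$ with probability at least $1 - 1/n^2$. Summing over the $\BO{\log n}$ stages gives $\BO{\tau\log^2 n}$ centers from selection, and the at most $8\tau\log n$ singleton clusters created on termination are absorbed into the same order, so the clustering has $\BO{\tau\log^2 n}$ clusters.

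The radius bound follows from an additive accounting across stages. Each stage covers a node $u$ only when its label satisfies $d_u \leq \Delta$, and by Procedure Contract the center to which $u$ is ultimately attributed may be the representative of a cluster grown in an earlier stage; hence the true distance in $G$ from the original center to $u$ is bounded by the sum of the $\Delta$ values active over the stages along which the cluster was built. Since $\Delta$ is monotonically nondecreasing and never exceeds $\Dend = \BO{R_G(\tau)}$, and there are $\BO{\log n}$ stages, this telescoping sum is $\BO{R_G(\tau)\log n}$, as required.

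The main obstacle is the bound on the total number of $\Delta$-growing steps, which must combine three separate observations. Within a single call to PartialGrowth the relaxations behave like Bellman--Ford restricted to the current $\Delta$-ball, so the labels stabilize --- and the call reports ``no state is updated'' --- after at most $\ell_{\Delta}$ productive steps, where $\Delta$ is the threshold in force. Using that $\Delta \leq \Dend = \BO{R_G(\tau)}$ and the stated property $\ell_{c\Delta} = \BT{\ell_{\Delta}}$, each stage therefore contributes $\BO{\ell_{R_G(\tau)}}$ productive steps plus one terminating step per guess of $\Delta$. The number of such guesses is controlled by the fact that $\Delta$ is never reset and only doubles: starting from the minimum edge weight and ending at $\Dend$, the total number of doublings over the whole execution is $\BO{\log n}$, since the edge weights --- and hence the ratio of $\Dend$ to the smallest weight --- are polynomial in $n$. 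Multiplying the $\BO{\log n}$ stages by the $\BO{\ell_{R_G(\tau)}}$ per-stage productive steps, and adding the $\BO{\log n}$ terminating steps (dominated since $\ell_{R_G(\tau)} \geq 1$), gives $\BO{\ell_{R_G(\tau)}\log n}$ $\Delta$-growing steps overall. I expect the delicate point here to be justifying the per-call convergence bound $\ell_{\Delta}$ together with the global monotonicity of $\Delta$, since the state labels persist across PartialGrowth calls and across stages via contraction.
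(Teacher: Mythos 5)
Your proposal is correct and follows essentially the same route as the paper's proof: Chernoff plus a union bound for the cluster count, an additive $\Dend=\BO{R_G(\tau)}$ charge per growth phase over $\BO{\log n}$ phases for the radius, and the Bellman--Ford convergence bound of $\ell_\Delta \leq \ell_{\Dend}$ steps per {\tt PartialGrowth} call combined with the $\BO{\log n}$ bound on the total number of calls (stages plus doublings, using polynomiality of the weights). The only cosmetic difference is that you charge the radius and the productive steps per stage while the paper charges them per inner-loop iteration; since both counts are $\BO{\log n}$, the bounds coincide.
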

\begin{proof}
By Chernoff's bound each iteration of the outer {\bf while} loop
selects $\BO{\tau \log n}$ new cluster centers with high
probability. Hence, the bound on the number of clusters follows
applying the union bound over the $\BO{\log n}$ iterations of this
loop. As for the bound on the clustering radius, we observe that the
aggregate number of iterations of the inner {\bf while} loop is at most
$\log n + \log \Dend$, and this number is $\BO{\log n}$ with high
probability by virtue of Lemma~\ref{lem-clust2} and the assumption on
the polynomiality of the edge weights. The bound follows by observing
that each such iteration increases the radius of cluster by an
additive term at most $\Dend = \BO{R_G(\tau)}$.

Finally, observe that in every iteration of the
inner {\bf while} loop the number of $\Delta$-growing steps executed
by procedure PartialGrowth is at most $\ell_\Delta \leq \ell_{\Dend}$ since, by the properties of edge relaxations, after
those many growing steps all nodes at distance less than $\Delta$ from
some center of $X$ have been reached by the closest center in $X$ with
a minimum-weight path, hence their state cannot be further
updated. Therefore, the final number of $\Delta$-growing steps will be
$\BO{\ell_{\Dend}\log n}=\BO{\ell_{R_G(\tau)}\log n}$, with high probability.  
\end{proof}
\section{Diameter Approximation}\label{sec-diameter}
In this section, we present an algorithm to estimate the diameter of a
weighted graph as a function of the diameter of a suitable (much
smaller) quotient graph of a clustering obtained through a refined
version of the strategy devised in the previous section.  As it will
be clarified by the analysis, the refinement is introduced to achieve
a provable bound on the approximation guarantee, by ensuring
that not
too many clusters have the potential to reach small neighborhoods of
the graph.  Algorithm {\tt CLUSTER2}$(G,\tau)$, whose pseudocode is
given in Algorithm~\ref{alg:cluster2}, builds the required clustering
by first computing the radius $\ClustR$ of the clustering returned by
{\tt CLUSTER}$(G,\tau)$ and then executing $\log n$ iterations where
uncovered nodes are selected as new cluster centers with probability
doubling at each iteration.  In the $i$-th iteration, both previous
and new clusters are grown using $2\ClustR$-growing steps until
\emph{all} uncovered nodes at distance at most $2\ClustR$ from them are reached
(Procedure PartialGrowth2). At the end of the iteration, the graph is
contracted using Procedure Contract2, which is similar to Procedure
Contract used in {\tt CLUSTER} with the only difference that each
original edge $(u,v)$ of weight $w(u,v) \leq 2\ClustR$ and such that
$c_u$ is defined and $c_v$ is undefined, is replaced by a new edge
$(c_u,v)$ with rescaled weight $d_u+w(u,v)-2\ClustR$. (In fact,
original edges of weight greater than $2\ClustR$ are never used by
{\tt CLUSTER2}.)

\begin{algorithm}[t]
\caption{{\tt CLUSTER2}$(G,\tau)$}
\label{alg:cluster2}
\DontPrintSemicolon
Let $\ClustR$ be the radius of the clustering returned by {\tt CLUSTER}$(G,\tau)$\;
\Let{$C_1$}{$\emptyset$} /* {\it (current set of cluster centers)} */\;
\Let{$G_1(V_1,E_1)$}{$G(V,E)$}\;
\For{$i\leftarrow 1$ \KwTo $\log n$}{

{
Select $v\in V_i-C_i$ as a new center independently
with probability $2^{i}/n$\;
}

\Let{X}{$C_i \cup \{\mbox{newly selected centers}\}$}\;
\ForEach{$u \in V_i$}{
\lIf{$u \in X$}{\Let{$(c_u,d_u)$}{$(u,0)$}
  {\bf else} \Let{$(c_u,d_u)$}{$(\mbox{nil},\infty)$}}
}
PartialGrowth2$(G_i,2\ClustR)$ \;
\Let{$G_{i+1}(V_{i+1},E_{i+1})$}{Contract2$(G_i)$}\;
\Let{$C_{i+1}$}{$X$}\;
}
\vspace{7pt}
{\bf Procedure} PartialGrowth2$(\bar{G},\Delta)$ \;
{
\lRepeat{\rm no state is updated}
{perform a $\Delta$-growing step  on $\bar{G}$}
}
\end{algorithm}

\sloppy
The following lemma analyzes the quality of the clustering
returned by {\tt CLUSTER2}$(G,\tau)$ and upper bounds the number of
growing steps performed.
\begin{lemma}\label{lem-cluster2}
Let $\tau$ be a positive integer.  With high probability, {\tt
  CLUSTER2}$(G,\tau)$ computes an $\BO{\tau \log^4 n}$-clustering of
radius $\Clust2R = \BO{R_G(\tau) \log^2 n}$ by performing
$\BO{\ell_{R_G(\tau) \log n}\log n}$ $\Delta$-growing steps with
$\Delta = \BO{R_G(\tau) \log n}$.
\end{lemma}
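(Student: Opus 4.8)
The plan is to bound the three quantities in the statement separately: the number of clusters, the radius $\Clust2R$, and the number of $\Delta$-growing steps. The three tools I would use are the estimate $\ClustR = \BO{R_G(\tau)\log n}$ together with the $\BO{\tau\log^2 n}$ cluster count supplied by Theorem~\ref{thm:radius_batched}; the bookkeeping performed by the weight rescaling in Procedure Contract2; and a doubling-probability covering argument, analogous in spirit to the one in Lemma~\ref{lem-clust2}, but played against the clustering that {\tt CLUSTER} itself returns rather than against an optimal $\tau$-clustering.

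I would first dispose of the radius and the growing-step counts, which are the more routine parts, reading Contract2 as maintaining the invariant that, at the start of iteration $i$, the rescaled weight of the image in $G_i$ of any path of $G$ does not exceed its true weight, the offset $2(i-1)\ClustR$ subtracted by Contract2 exactly compensating the distance already spent reaching the contracted centers in earlier iterations. Since PartialGrowth2 uses threshold $2\ClustR$, a node covered in iteration $i$ has rescaled distance at most $2\ClustR$ from its center, hence true distance at most $2i\,\ClustR$; over $\log n$ iterations the final (true) radius is at most $2\ClustR\log n=\BO{R_G(\tau)\log^2 n}$, which is $\Clust2R$. For the step count I would argue per iteration: because PartialGrowth2 runs to convergence, the number of $\Delta$-growing steps equals the maximum hop-length of a minimum-rescaled-weight path in $G_i$ that reaches a node within the threshold $2\ClustR$. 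Such a path consists of at most one contracted boundary edge leaving its center followed by original edges through currently uncovered nodes; since Contract2 leaves edges between uncovered nodes unrescaled, that portion has original weight at most $2\ClustR$ and hence at most $\ell_{2\ClustR}$ edges. Using $\ell_{c\Delta}=\BT{\ell_{\Delta}}$ and $2\ClustR=\BO{R_G(\tau)\log n}$, each iteration costs $\BO{\ell_{R_G(\tau)\log n}}$ steps, and summing over the $\log n$ iterations gives the stated $\BO{\ell_{R_G(\tau)\log n}\log n}$ with $\Delta=2\ClustR=\BO{R_G(\tau)\log n}$.

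For the number of clusters the comparison object is the clustering $\bar C$ returned by {\tt CLUSTER}$(G,\tau)$ itself, which by Theorem~\ref{thm:radius_batched} consists of $\tau''=\BO{\tau\log^2 n}$ clusters, each of radius at most $\ClustR$. The covering engine is the claim that, whenever some uncovered node of a $\bar C$-cluster is selected as a center in iteration $i$, every uncovered node of that $\bar C$-cluster is covered during iteration $i$: indeed any two of its nodes are joined in $G$ by a path of weight at most $2\ClustR$, which by the rescaling invariant maps to a path of rescaled weight at most $2\ClustR$ in $G_i$, and since PartialGrowth2 runs until no state is updated it must reach the node. Granting this, a $\bar C$-cluster still holding $\BOM{n\log n/2^i}$ uncovered nodes is hit in iteration $i$ with probability $1-n^{-\BOM{1}}$ under the selection probability $2^i/n$; a union bound over the $\tau''$ clusters and the $\log n$ iterations then forces $|V_i-C_i|=\BO{\tau'' n\log n/2^i}$ with high probability. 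Hence the expected number of new centers in iteration $i$ is $|V_i-C_i|\cdot 2^i/n=\BO{\tau''\log n}=\BO{\tau\log^3 n}$, and a Chernoff bound together with a union bound over iterations yields $\BO{\tau\log^4 n}$ centers in total with high probability (the extra two logarithms relative to the unweighted-style bound come precisely from $\tau''=\BO{\tau\log^2 n}$). Finally, the last iteration uses selection probability $2^{\log n}/n=1$, so every remaining uncovered node becomes a singleton center and the output is a genuine partition of $V$.

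The main obstacle is the covering claim inside the contracted, rescaled graphs, and in particular establishing the rescaling invariant on which both the radius bound and the covering claim rest. The delicate point is to verify that a connecting path of true weight at most $2\ClustR$ retains rescaled weight at most $2\ClustR$ even when it threads through regions contracted in earlier iterations: one decomposes the path into contracted boundary edges, each leaving an old center, alternating with original segments through currently uncovered nodes, and checks that the per-iteration offset subtracted by Contract2 never lets the rescaled length exceed the true length, so that the node is reached from \emph{some} center of $X$ (possibly an old one) within the threshold. Once this invariant is secured, the radius bound, the per-iteration step bound, and the doubling accounting all follow as above.
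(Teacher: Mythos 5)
Your proposal is correct and follows essentially the same route as the paper's proof: condition on Theorem~\ref{thm:radius_batched}, obtain the radius and step bounds from the per-iteration threshold $2\ClustR$ over the $\log n$ iterations, and bound the cluster count via a geometric-decrease-of-uncovered-nodes argument in which each sufficiently populated reference cluster is hit by a new center with high probability once the selection probability is large enough. The only (immaterial) differences are that you compare against the clustering returned by {\tt CLUSTER} itself rather than against an optimal $K$-clustering with $K=\BO{\tau\log^2 n}$, and that you spell out the rescaling invariant and hop-count accounting that the paper dismisses as straightforward.
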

\begin{proof}
By Theorem~\ref{thm:radius_batched} we know that with high probability
the invocation of {\tt CLUSTER}$(G,\tau)$ at the beginning of the
algorithm computes a $K$-clustering, with $K = \BO{\tau \log^2n}$, of
radius $\ClustR = \BO{R_G(\tau) \log n}$.  In what follows, we
condition on this event. Then, the bounds on the number of growing
steps and on $\Clust2R$ are straightforward.  For $\gamma = 4/\log_2
e$,  define $H$ as the smallest integer such that $2^H/n \geq
(\gamma K \log n)/n$, and let $t=\log n -H$. We now show that the
number of original nodes of $G$ not yet reached by any cluster
decreases at least geometrically at each iteration of the {\bf for}
loop after the $H$-th one. Recalling that $V_{H+i}-C_{H+i}$ is the set
of original nodes of $G$ that at the beginning of Iteration $H+i$ have
not been reached by any cluster, for $1\leq i\leq t$, define the event
$E_i =$``at the beginning of Iteration $H+i$, $V_{H+i}-C_{H+i}$
contains at most $n/2^{i-1}$ nodes''.  We now prove that the event
$\cap_{i=1}^{t} E_i$ occurs with high probability. Observe that:
\begin{eqnarray*}
\Pr\left(\cap_{i=1}^{t} E_i\right) 
& = & \Pr(E_1)\prod_{i=1}^{t-1} \Pr(E_{i+1} | E_1\cap \cdots \cap E_i) \\
& = & \prod_{i=1}^{t-1} \Pr(E_{i+1} | E_1\cap \cdots \cap E_i),
\end{eqnarray*}
since $E_1$ clearly holds with probability one.  Consider an arbitrary
$i$, with $1 \leq i < t$, and assume that $E_1\cap \cdots \cap
E_i$ holds. We prove that $E_{i+1}$ holds with high probability.
Since $E_i$ holds, we have that at the beginning of
Iteration $H+i$, the number of nodes in $V_{H+i}-C_{H+i}$
is at most $n/2^{i-1}$. Clearly, if $|V_{H+i}-C_{H+i}| \leq n/2^i$ then $E_{i+1}$
trivially holds with probability one. Thus, we consider only the case
\[
{n \over 2^i} < |V_{H+i}-C_{H+i}| \leq {n \over 2^{i-1}}.
\]
In order to show that $E_{i+1}$ holds also in this case, we resort to the 
same argument used in the proof of Lemma~\ref{lem-clust2}.
Let $\bar{C}$ be a $K$-clustering of the whole graph with optimal
radius $R_G(K)$ and observe that $R_G(K) \leq \ClustR$.  To avoid
confusion, we refer to its clusters as \emph{$\bar{C}$-clusters},
while simply call \emph{clusters} those grown by {\tt CLUSTER2}.
Consider the $\bar{C}$-clusters that include some nodes of
$V_{H+i}-C_{H+i}$, and call one such cluster \emph{large} if it
contains at least 
\[
{|V_{H+i}-C_{H+i}| \over 2K} > {n \gamma \log n \over 2^{H+i+1}}
\]
nodes of $V_{H+i}-C_{H+i}$. This implies that the large
$\bar{C}$-clusters contain, altogether, at least half of the nodes of
$V_{H+i}-C_{H+i}$. Moreover, by the choice of $\gamma$, it is easy to
argue that with probability $\geq (1-1/n)$ at least one new center is
selected  from each large $\bar{C}$-cluster in Iteration $H+i$.  Consider
now an arbitrary large $\bar{C}$-cluster centered at $\bar{c}_\ell$
and let $c' \in \bar{C}_\ell$ be a new center selected from this
cluster in the iteration.  For every $v \in \bar{C}_\ell \cap
(V_{H+i}-C_{H+i})$ there is a path in $G$ from $c'$ to $v$ (through
$\bar{c}_\ell$) of weight $w \leq \ClustR$, hence there must be a path
in $G_i$ from from $c'$ to $v$ of weight at most $w$.  It then follows that node $v$
will be covered by some cluster in Iteration~$H+i$.  Consequently, in the
iteration at least half of the nodes of $V_{H+i}-C_{H+i}$ will be
covered by clusters, with probability at least $1-1/n$.

By multiplying the probabilities of the $\BO{\log n}$ conditioned
events, we conclude that event $\cap_{i=1}^{t} E_i$ occurs with high
probability. Note that in the last iteration (Iteration~$H+t$) all
uncovered nodes are selected as centers with probability 1, and, if
$\cap_{i=1}^{t} E_i$ occurs, these are $\BO{K \log n}$. Now, one can
easily show that, with high probability, in the first $H$ iterations,
$\BO{K \log^2 n}$ clusters are added and, by conditioning on
$\cap_{i=1}^{t+1} E_i$, at the beginning of each Iteration~$H+i$,
$1\leq i\leq t$, $\BO{K\log n}$ new clusters are created, for a total
of $\BO{K\log^2n} = \BO{\tau\log^4n}$ clusters.
\end{proof}

Observe that for fixed $\tau$, the clustering returned by {\tt CLUSTER2} 
has a larger number of clusters and a weaker guarantee on its radius than the
the clustering returned by {\tt CLUSTER}. As such, {\tt CLUSTER2}  does not
appear to be a very desirable clustering strategy in itself. However, {\tt CLUSTER2}
enforces the following  important property which will be needed for proving  the diameter approximation.
With reference to a specific execution of  {\tt CLUSTER2}, define the \emph{light distance} between two nodes $u$ and $v$ as the weight
of the minimum-weight path from $u$ and $v$ consisting only of edges
of weight at most $2\ClustR$. (Note that the light distance is not
necessarily defined for every pair of nodes.) 

Due to the weight rescaling performed by Contract2 at the end of each
iteration, given a center $c$ selected at a certain Iteration $i$ of
the {\bf for} loop, and a node $v$ at light distance $d$ from $c$, the
cluster centered at $c$ cannot grow to reach $v$ in less than
$\lceil d/2\ClustR\rceil$ iterations and that in those many iterations
$v$ will be reached by some cluster (possibly the one centered at
$c$). Consequently, no center selected at a later iteration
at that same distance from $v$ as $c$ would be able to reach $v$.

We are now ready to present the main result of this section, which
shows how {\tt CLUSTER2} can be employed to determine a good
approximation to the graph diameter. Suppose we run {\tt CLUSTER2} on
a graph $G=(V,E,w)$ to obtain a clustering $C$ of radius $\Clust2R$.
For each $u \in V$, let $c_u$ be the center of the cluster assigned to
$u$, and let $d_u$ be distance between $u$ and $c_u$ returned by {\tt
  CLUSTER2}.  As in \cite{Meyer08}, we define the weighted quotient
graph associated to $C$ as the graph $G_C$ where nodes correspond to
clusters and, for each edge $(u,v)$ of $G$ with $c_u \neq c_v$, there
is an edge in $G_C$ between the clusters of $u$ and $v$ with weight
$w(u,v)+d_u+d_v$. (In case of multiple edges between two clusters, it
is sufficient to retain only the one yielding minimum weight.)  Let
$\Phi(G)$ (resp., $\Phi(G_C)$) be the weighted diameter of $G$ (resp.,
$G_C$). We approximate $\Phi(G)$ through the value $\Phi_{\rm
  approx}(G) = \Phi(G_C)+2 \Clust2R$.  It is easy to see that our
estimate is conservative, that is, $\Phi_{\rm approx}(G) \geq
\Phi(G)$. We have:
\begin{theorem}\label{segment}
With high probability,
{
\[
\Phi_{\rm approx}(G) = \BO{\Phi(G) \log^3 n}.
\]
}
\end{theorem}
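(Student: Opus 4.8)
The plan is to bound the two summands of $\Phi_{\rm approx}(G)=\Phi(G_C)+2\Clust2R$ separately. The additive term is immediate: by Lemma~\ref{lem-cluster2} we have $\Clust2R=\BO{R_G(\tau)\log^2 n}$, and since the optimal radius of a single cluster is at most the diameter we have $R_G(\tau)\le R_G(1)\le \Phi(G)$, whence $2\Clust2R=\BO{\Phi(G)\log^2 n}$. All the work therefore lies in proving $\Phi(G_C)=\BO{\Phi(G)\log^3 n}$. I would fix a diametral pair of clusters $A,B$ of $G_C$ with centers $a,b$, and take $P$ to be a simple shortest path of $G$ from $a$ to $b$, whose weight is at most $\Phi(G)$.

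First I would project $P$ onto $G_C$. Reading off the clusters of the nodes of $P$ in order yields a sequence over the alphabet of clusters in which every consecutive distinct pair $(c_u,c_v)$ is, by the definition of $G_C$, joined by a quotient edge of weight $w(u,v)+d_u+d_v$; hence $P$ induces a walk in $G_C$ from $A$ to $B$. Deleting cycles from this walk (short-circuiting each cluster to its last occurrence, which is legitimate because a cluster is a single vertex of $G_C$) produces a simple path $A=Z_0,Z_1,\dots,Z_m=B$ that visits each distinct cluster met by $P$ exactly once and whose surviving edges still correspond to pairwise distinct edges of $P$. Writing $k$ for the number of distinct clusters met by $P$, we obtain $m\le k-1$ and
\[
\dist_{G_C}(A,B)\;\le\;\sum_{l} w(x_l,y_l)\;+\;\sum_{l}\bigl(d_{x_l}+d_{y_l}\bigr)\;\le\;\Phi(G)+2(k-1)\Clust2R,
\]
using $\sum_l w(x_l,y_l)\le \Phi(G)$ (distinct edges of $P$) and $d_u\le\Clust2R$. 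It remains to bound $k$.

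To bound $k$ I would cut $P$ into light-connected pieces. Splitting $P$ at its heavy edges (weight $>2\ClustR$), of which there are at most $\Phi(G)/(2\ClustR)$, decomposes $P$ into $\BO{\Phi(G)/\ClustR+1}$ maximal light subpaths, which I further cut into pieces of light-weight at most $2\ClustR$; this yields $\BO{\Phi(G)/\ClustR+1}$ pieces, within each of which any two nodes are at light distance at most $2\ClustR$. Hence every cluster meeting a piece also meets the light ball of radius $2\ClustR$ around a fixed node of that piece, and its center lies at light distance $\BO{\Clust2R}$ from that node. The crucial ingredient is the following sparsity claim: with high probability, for every node $v$ the number of distinct clusters containing a node at light distance at most $2\ClustR$ from $v$ is $\BO{\log n}$. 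Granting the claim, each piece meets $\BO{\log n}$ clusters, so $k=\BO{(\Phi(G)/\ClustR+1)\log n}$; substituting $\Clust2R=\BO{\ClustR\log n}$ and $\ClustR=\BO{R_G(\tau)\log n}=\BO{\Phi(G)\log n}$ into the displayed bound gives $2(k-1)\Clust2R=\BO{\Phi(G)\log^3 n}$, hence $\Phi(G_C)=\BO{\Phi(G)\log^3 n}$ and the theorem follows after a union bound over the diametral pair.

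The heart of the proof --- and the step I expect to be hardest --- is the sparsity claim, which is exactly the property that distinguishes {\tt CLUSTER2} from {\tt CLUSTER}. I would prove it by combining the light-distance timing property established above for {\tt CLUSTER2} with the probabilistic analysis of Lemma~\ref{lem-cluster2}. Fixing $v$, let $\rho=\BO{\Clust2R}=\BO{\ClustR\log n}$ bound the light distance from $v$ to the center of any cluster meeting the ball of light-radius $2\ClustR$ about $v$. By the timing property a center selected at iteration $s$ at light distance $d\le\rho$ from $v$ cannot reach $v$ before iteration $s+\lceil d/2\ClustR\rceil-1$, and in those $\lceil d/2\ClustR\rceil=\BO{\log n}$ iterations $v$ --- and, one growing step later, its entire light-ball of radius $2\ClustR$ --- is certainly covered; moreover, once covered, no later center at a comparable distance can reach into that ball. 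This confines the centers of all clusters meeting the ball to a window of $\BO{\log n}$ consecutive iterations. Conditioning on the coverage-halving events $E_i$ of Lemma~\ref{lem-cluster2}, which cap the number of uncovered nodes and hence (through the doubling selection probabilities) the per-iteration center count inside the ball, a Chernoff bound then limits the number of such centers to $\BO{\log n}$ with high probability. The delicate point is controlling the per-iteration center count inside the ball for general graphs, where neighbourhood sizes may grow abruptly with the radius: one must exploit the rescaling/timing structure to argue that, once the first nearby center appears, the ball is covered geometrically fast, so that the doubling of the selection probability across the $\BO{\log n}$-iteration window does not inflate the surviving-center count beyond $\BO{\log n}$.
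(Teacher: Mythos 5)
Your overall architecture is the same as the paper's: bound the additive term via $\Clust2R=\BO{R_G(\tau)\log^2 n}=\BO{\Phi(G)\log^2 n}$, project a minimum-weight path between cluster centers onto $G_C$, cut it into $\BO{\lceil \Phi(G)/\ClustR\rceil}$ pieces by separating heavy edges from light segments of weight $\BO{\ClustR}$, and bound the number of clusters touching each piece using the rescaling/timing property of Contract2 together with the doubling of the selection probabilities. The difference, and the gap, is entirely in the per-piece sparsity claim. You assert that each light ball of radius $2\ClustR$ meets only $\BO{\log n}$ clusters, and you propose to prove it by confining all relevant centers to a single window of $\BO{\log n}$ consecutive iterations. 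That confinement is false as stated: the candidate centers lie at light distances up to $\BO{\Clust2R}=\BO{\ClustR\log n}$ from the piece, and a center at light distance about $j\ClustR$ selected at iteration $s$ arrives only around iteration $s+j/2$, so the pairs (iteration, distance) that can contribute a cluster reaching the piece form a staircase spanning many iterations, not one window. Moreover, even within a fixed window the number of centers selected inside the ball need not be $\BO{1}$ per iteration, and you correctly flag --- but do not resolve --- the problem that the ball may contain many nodes whose selection probability doubles across the window. This is exactly the step where the real work lies, and your sketch does not close it.

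The paper's fix is to stratify: partition the candidate centers into $\BO{\log n}$ shells $C(S,j)$ by light distance from the segment, in increments of $\ClustR$. For shell $j$, if $i_j$ is the first iteration in which any center is selected from $C(S,j)$, the timing property shows that centers selected from that shell at iteration $i_j+2$ or later can no longer reach the segment (the segment is already covered by then), and the smooth doubling of the selection probability forces the number of centers selected from $C(S,j)$ at iterations $i_j$ and $i_j+1$ to be $\BO{\log n}$ with high probability. This yields $\BO{\log^2 n}$ clusters per light segment --- weaker than your $\BO{\log n}$ claim, but it still gives weight $\BO{w_\pi+\Clust2R\lceil w_\pi/\ClustR\rceil\log^2 n}=\BO{\Phi(G)\log^3 n}$ for the projected path, so nothing is lost. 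Your outline is repaired by replacing the single-window argument with this shell decomposition and weakening the sparsity claim to $\BO{\log^2 n}$; as written, the claim is both over-stated and under-proved. (Minor points: the paper also treats the degenerate case $2\ClustR>\Phi(G)$ separately, and the union bound should run over all pairs of cluster centers, since the diametral pair of $G_C$ is itself determined by the random execution.)
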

\begin{proof}
Since $\Clust2R = \BO{R_G(\tau) \log^2n}$ (by
Lemma~\ref{lem-cluster2}), and $R_G(\tau) = \BO{\Phi(G)}$, we have
that $\Clust2R = \BO{\Phi(G) \log^2n}$.  In order to show that
$\Phi(G_C) = \BO{\Phi(G) \log^3 n}$, let us fix an arbitrary pair of
cluster centers and an arbitrary minimum-weight path $\pi$ between
them in $G$, and let $w_\pi$ be the weight of $\pi$.  Let $\pi_C$ be
the path of clusters in $G_C$ traversed by $\pi$.  We now show that
with high probability the weight of $\pi_C$ in $G_C$ is $\BO{\Phi(G)
  \log^3 n}$. The bound on the approximation will then follow by
applying the union bound over all pairs of cluster centers.  Consider
first the case $2\ClustR > \Phi(G)$ (note that this can happen since
the clustering yielding the radius $\ClustR$ determined at the
beginning of {\tt CLUSTER2} is built out of paths using only light
edges of weight $\BO{R_G(\tau)}$). In this case, it is easy to see
that the first batch of centers ever selected in an iteration of the
for loop of {\tt CLUSTER2} will cover the entire graph, and these
centers are $\BO{\log n}$ with high probability. Therefore, $\pi_C$
contains $\BO{\log n}$ clusters and its weight is
$\BO{w_{\pi}+\Clust2R \log n}=\BO{\Phi(G) \log^3 n}$. Suppose now that
$2\ClustR \leq \Phi(G)$.  We show that at most $\BO{\lceil
  w_\pi/\ClustR \rceil\log^2 n}$ clusters intersect $\pi$ (i.e.,
contain nodes of $\pi$), with high probability.  It can be seen that
$\pi$ can be divided into $\BO{\lceil w_{\pi} /\ClustR\rceil}$
subpaths, where each subpath is either an edge of weight $>
\ClustR$ or a segment of weight $\leq \ClustR$. It is then sufficient to
show that the nodes of each of the latter segments belong to
$\BO{\log^2 n}$ clusters. Consider one such segment $S$. Clearly, all clusters
containing nodes of $S$ must have their centers at light distance at
most $\Clust2R$ from $S$ (i.e., light distance at most $\Clust2R$ from
the closest node of $S$).  Recall that $\Clust2R \leq 2\ClustR \log
n$. For $1 \leq j \leq 2\log n+1$, let $C(S,j)$ be the set of nodes
whose light distance from $S$ is between $(j-1)\ClustR$ and
$j\ClustR-1$, and observe that any cluster intersecting $S$ must be
centered at a node belonging to one of the $C(S,j)$'s. We claim that,
with high probability, for any $j$, there are $\BO{\log n}$ clusters
centered at nodes of $C(S,j)$ which may intersect $S$.  Fix an index
$j$, with $1 \leq j \leq 2\log n+1$, and let $i_j$ be the first
iteration of the for loop of {\tt CLUSTER2} in which some center is
selected from $C(S,j)$. 
By the property of
{\tt CLUSTER2} discussed after Lemma~\ref{lem-cluster2},
$\lceil ((j+1)\ClustR-1)/(2\ClustR) \rceil$
iterations are sufficient
for any of these centers to cover the entire segment. On the other hand,
any center from $C(S,j)$ needs at least $\lceil (j-1)\ClustR/(2\ClustR) \rceil$ iterations to touch the segment. Hence, 
 we have that no center selected
from $C(S,j)$ at Iteration $i_j+2$ or higher is able to reach $S$.
It is easy to see that, due to the smooth
growth of the center selection probabilities, the number of centers
selected from $C(S,j)$ in Iterations $i_j$ and $i_j+1$
is $\BO{\log n}$, with high
probability. 
This implies that the nodes of segment $S$ will belong to
$\BO{\log^2 n}$ clusters, with high probability. By applying the union
bound over all segments of $\pi$, we have that $\BO{\lceil
  w_\pi/\ClustR \rceil\log^2 n}$ clusters intersect $\pi$, with high
probability. Therefore, the weight of $\pi_C$ is $\BO{w_{\pi}+\Clust2R
  \lceil w_\pi/\ClustR \rceil\log^2 n} = \BO{\Phi(G)+\Clust2R
  (\Phi(G)/\ClustR) \log^2 n} = \BO{\Phi(G) \log^3 n}$.  
\end{proof}
\subsection{Implementation in the \MR\ model}
We now discuss the implementation of the above diameter approximation
algorithm in the MR model using overall linear space and show that,
for a relevant class of graphs, its round complexity can be made
asymptotically smaller than the one required to obtain a
2-approximation through the state-of-the-art SSSP algorithm by
\cite{meyer2003delta}. For a given connected weighted graph
$G=(V,E,w)$, consider the \MR\ model with total memory $M_T$ linear in
the graph size, and local memory $M_L \in \BO{n^\epsilon}$, for some
constant $\epsilon \in (0,1)$. We begin by observing that, regardless
of the number of active clusters, a $\Delta$-growing step, for any
$\Delta$, can be implemented through a constant number of simple
prefix and sorting operations which, by Fact~\ref{prefixsorting},
require $\BO{1}$ rounds on \MR. By combining this observation with
the results of Theorem~\ref{thm:radius_batched} and
Lemma~\ref{lem-cluster2}, we obtain that {\tt CLUSTER2}$(G,\tau)$ can be
implemented in $\BO{\ell_{R_G(\tau) \log n}\log n}$ rounds in \MR.

For an arbitrary positive constant $\epsilon'< \epsilon$, 
let $\tau = \lceil n^{\epsilon'} \rceil$
and let $G_C=(V_C,E_C,w_C)$ be the weighted quotient graph
associated with the clustering returned by {\tt CLUSTER2}$(G,\tau)$.  The
diameter $\Phi(G_C)$ (in fact, a constant approximation to this
quantity), and, consequently, the value $\Phi_{\rm approx}(G)$, can
then be computed in $\BO{1}$ rounds in \MR\ by adopting the same
techniques described in \cite{CeccarelloPPU15}.

The following theorem summarizes the above discussion.
\begin{theorem} \label{MR-complexity}
Let $G$ be a connected weighted graph with $n$ nodes, $m$ edges and
weighted diameter $\Phi(G)$. Also, let $\epsilon' < \epsilon \in
(0,1)$ be two arbitrary constants,
and let $\tau = \lceil n^{\epsilon'} \rceil$. 
On the \MR\ model, with $M_G =
\BT{m}$ and $M_L = \BT{n^{\epsilon}}$, an upper bound $\Phi_{\rm
  approx}(G) = \BO{\Phi(G) \log^3 n}$ to the diameter of $G$ can be
computed in $\BO{\ell_{R_G(\tau) \log n}\log n}$ rounds, with high 
probability.
\end{theorem}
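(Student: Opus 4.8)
The plan is to prove Theorem~\ref{MR-complexity} by assembling the pieces already established in the preceding results, so that the theorem essentially becomes a bookkeeping statement tying together correctness (the approximation ratio) with the \MR{} round complexity. First I would invoke Theorem~\ref{segment} to fix the approximation guarantee: the estimate $\Phi_{\rm approx}(G) = \Phi(G_C) + 2\Clust2R$ satisfies $\Phi_{\rm approx}(G) = \BO{\Phi(G)\log^3 n}$ with high probability, and it is conservative ($\Phi_{\rm approx}(G) \geq \Phi(G)$), so the quality claim is inherited directly. What remains is purely a matter of round complexity in the target \MR{} instance with $M_T = \BT{m}$ and $M_L = \BT{n^\epsilon}$.

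For the round count, the key observation I would make precise is that a single $\Delta$-growing step, \emph{independently of the number of active clusters}, reduces to a constant number of sorting and (segmented) prefix-sum primitives: for each node one must gather the candidate updates arriving along light edges from neighbors with $d_u < \Delta$, select the minimum $(d_u + w(u,v))$ and break ties by center index, then scatter the winning state back to $v$. Each of these is a sort-and-prefix pattern, so by Fact~\ref{prefixsorting} a growing step costs $\BO{\log_{M_L} n} = \BO{1}$ rounds when $M_L = \BT{n^\epsilon}$. Multiplying by the total number of $\Delta$-growing steps performed by {\tt CLUSTER2}$(G,\tau)$, which Lemma~\ref{lem-cluster2} bounds by $\BO{\ell_{R_G(\tau)\log n}\log n}$ with high probability, gives a round complexity of $\BO{\ell_{R_G(\tau)\log n}\log n}$ for building the clustering. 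Since for a constant $c$ we have $\ell_{c\Delta} = \BT{\ell_\Delta}$, the $\log n$ factor inside the subscript is the only genuine inflation of the edge-length parameter, and it is already absorbed into the stated bound.

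Next I would account for the final diameter-extraction phase. With $\tau = \lceil n^{\epsilon'}\rceil$ and $\epsilon' < \epsilon$, Lemma~\ref{lem-cluster2} guarantees the quotient graph $G_C$ has $\BO{\tau\log^4 n} = \BO{n^{\epsilon'}\log^4 n} = o(n^\epsilon)$ clusters, so $G_C$ fits in a single reducer's local memory of size $\BT{n^\epsilon}$; this is exactly why the margin $\epsilon' < \epsilon$ is imposed. Consequently $\Phi(G_C)$ (or a constant approximation to it) can be computed in $\BO{1}$ rounds by the techniques of~\cite{CeccarelloPPU15}, and $\Phi_{\rm approx}(G) = \Phi(G_C) + 2\Clust2R$ follows in $\BO{1}$ additional rounds. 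I would also check the aggregate-space claim: each growing step and each contraction keeps the data proportional to the current edge set, and the contractions only shrink it, so the total memory remains $M_T = \BT{m}$ throughout.

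The main obstacle, and the step deserving the most care, is justifying the $\BO{1}$-round implementation of a $\Delta$-growing step \emph{uniformly in the number of active clusters} — i.e.\ verifying that the per-node minimization with lexicographic tie-breaking on center index genuinely fits the segmented-sort / prefix-sum template of Fact~\ref{prefixsorting} without any hidden dependence on cluster count or maximum degree, and that the winning-update selection and the subsequent scatter can each be realized by a constant number of such primitives under local memory $\BT{n^\epsilon}$. Everything else is a direct substitution of the earlier bounds, so once this implementation detail is pinned down the theorem follows by multiplying the per-step round cost by the step count from Lemma~\ref{lem-cluster2} and adding the $\BO{1}$-round extraction phase.
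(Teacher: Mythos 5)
Your proposal is correct and follows essentially the same route as the paper: it combines Theorem~\ref{segment} for the approximation guarantee, the observation that each $\Delta$-growing step costs $\BO{1}$ rounds via the sorting/prefix-sum primitives of Fact~\ref{prefixsorting}, the step count from Lemma~\ref{lem-cluster2}, and an $\BO{1}$-round computation of $\Phi(G_C)$ using the techniques of \cite{CeccarelloPPU15}. The only difference is that you make explicit (and rightly flag as the delicate point) the reduction of a growing step to a constant number of sort/prefix operations and the size bound on the quotient graph, both of which the paper asserts without elaboration.
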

Note that the round complexity depends on the characteristics of the
graph and is nonincreasing in the number of clusters, which are in
turn controlled by parameter $\tau$. For general graphs, the value
$\ell_{R_G(\tau) \log n}\log n$ is $\BO{\ell_{\Phi(G)}\log n}$ while
the analysis in \cite{meyer2003delta} implies that under the
linear-space constraint a natural MR-implementation of $\Delta$-stepping
requires $\BOM{\ell_{\Phi(G)}\log n}$ rounds. Hence, our algorithm
cannot be asymptotically slower than $\Delta$-stepping. However, we
show below that our algorithm becomes considerably faster for an
important class of graphs.

The following definition introduces a concept that a number of recent works
have shown to be useful in relating algorithms' performance to graph
properties~\cite{AbrahamGGM2006}.
\begin{definition} \label{doublingdim} Consider an
undirected graph $G=(V,E)$.  The \emph{ball of radius $R$} centered at
node $v$ is the set of nodes reachable through paths of at most $R$
edges from $v$. Also, the \emph{doubling dimension} of $G$ is the
smallest integer $b > 0$ such that for any $R >0$, any ball of radius
$2R$ can be covered by at most $2^b$ balls of radius $R$.
\end{definition}
We can specialize the result of Theorem~\ref{MR-complexity} as follows.
\begin{corollary} \label{MR-complexity2}
Let $G$ be a connected graph with $n$ nodes, $m$ edges, maximum degree
$d \in \BO{1}$, doubling dimension $b \in \BO{1}$, and 
positive integral edge weights chosen uniformly at random from a
polynomial range. Denote by $\Phi(G)$ and
$\Psi(G)$, respectively, the weighted and unweighted diameter of
$G$. Also, let $\epsilon' < \epsilon \in (0,1)$ be two arbitrary
constants.  On the
\MR\ model, with $M_G = \BT{m}$ and $M_L = \BT{n^{\epsilon}}$, an
upper bound $\Phi_{\rm approx}(G) = \BO{\Phi(G) \log^3 n}$ to the
diameter of $G$ can be computed in 
\[
\BO{\left\lceil \Psi(G) \over n^{\epsilon'/b} \right\rceil \log^3 n}
\]
rounds, with high probability.
\end{corollary}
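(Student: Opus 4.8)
The plan is to invoke Theorem~\ref{MR-complexity}, which already reduces the round complexity to $\BO{\ell_{R_G(\tau)\log n}\log n}$ for $\tau=\ceil{n^{\epsilon'}}$, and then to bound the two graph-dependent quantities appearing in this expression: the optimal clustering radius $R_G(\tau)$ and the hop-length $\ell_{R_G(\tau)\log n}$ of minimum-weight paths. Writing $W=n^{\BO{1}}$ for the maximum edge weight, the whole point is that $R_G(\tau)$ will carry a factor $W$ while $\ell_\Delta$ will scale like $\Delta/W$, so that the two factors of $W$ cancel and the final bound depends only on $\Psi(G)$ and $\tau$.

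First I would bound $R_G(\tau)$ by a purely deterministic packing argument based on the doubling dimension. Since the unweighted diameter is $\Psi(G)$, the entire vertex set is a single ball of hop-radius $\Psi(G)$ (Definition~\ref{doublingdim}); applying the doubling property $\ceil{\log_2(\Psi(G)/\rho)}$ times covers it with at most $(\Psi(G)/\rho)^b$ balls of hop-radius $\rho$. Choosing $\rho=\BO{\Psi(G)/\tau^{1/b}}$ makes this number at most $\tau$, and after making the cover disjoint we obtain a clustering into at most $\tau$ parts, so $R_G(\tau)$ is at most its radius. Because every edge has weight at most $W$, a node at hop-distance $\rho$ from a center is at weighted distance at most $\rho W$, so each part has weighted radius $\BO{\rho W}$ and hence $R_G(\tau)=\BO{(\Psi(G)/\tau^{1/b})\,W}$. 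This step uses only $b=\BO{1}$ and $d=\BO{1}$ (to bound ball cardinalities) and no randomness.

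The hard part will be the complementary bound $\ell_\Delta=\BO{\ceil{\Delta/W}\log n}$, which is where the random weights enter. I would fix a threshold $\theta=\BT{W}$ chosen so that $\theta/W$ lies strictly below the bond-percolation threshold of the bounded-degree host graph (e.g.\ $\theta/W<1/d$), and call an edge \emph{light} if its weight is at most $\theta$. The event that the subgraph of light edges contains no simple path with more than $c\log n$ edges is a property of the weight assignment alone, and a standard first-moment/subcriticality estimate—there are at most $d^k$ candidate simple paths of length $k$ from each vertex, each light with probability $(\theta/W)^k$, so the expected number over all start vertices is at most $n(d\theta/W)^k$—shows it holds with high probability for $k=c\log n$. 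Conditioning on this event, any minimum-weight path of weight at most $\Delta$ contains at most $\Delta/\theta=\BO{\Delta/W}$ heavy edges (each contributing weight $>\theta$), and these heavy edges split the path into $\BO{\Delta/W}+1$ maximal runs of light edges, each of which is a light simple path and therefore has $\BO{\log n}$ edges. Summing gives $\ell_\Delta=\BO{(\Delta/W+1)\log n}=\BO{\ceil{\Delta/W}\log n}$.

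Finally I would assemble the two bounds. Setting $\Delta=R_G(\tau)\log n$, the previous paragraph gives $\ell_{R_G(\tau)\log n}=\BO{\ceil{R_G(\tau)\log n/W}\log n}$, and substituting $R_G(\tau)=\BO{(\Psi(G)/\tau^{1/b})\,W}$ cancels the $W$ and yields $\ceil{R_G(\tau)\log n/W}=\BO{\ceil{\Psi(G)/\tau^{1/b}}\log n}$. Multiplying by the extra $\log n$ from Theorem~\ref{MR-complexity} and recalling $\tau^{1/b}=\BT{n^{\epsilon'/b}}$ produces the claimed round complexity $\BO{\ceil{\Psi(G)/n^{\epsilon'/b}}\log^3 n}$. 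The only delicate point is the subcriticality argument of the third paragraph; the rest is a deterministic covering bound and bookkeeping of logarithmic factors.
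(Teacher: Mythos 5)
Your proposal is correct and follows essentially the same route as the paper's proof: a doubling-dimension covering argument giving $R_G(\tau)=\BO{(\Psi(G)/\tau^{1/b})W}$, followed by a random-weights argument showing that light-edge components have $\BO{\log n}$ nodes so that $\ell_\Delta=\BO{\ceil{\Delta/W}\log n}$, with the two factors of $W$ cancelling. The only difference is that where the paper cites branching-process theory (deferring details to the full version), you supply an explicit and valid first-moment bound over simple light paths, which is a welcome concretization of the same subcriticality idea.
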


\begin{proof}
By iterating the definition of doubling dimension starting from a
single ball of unweighted radius $\Psi(G)$ containing the whole graph,
we can decompose the graph into $\tau$ disjoint clusters of unweighted
radius $\psi=\BO{\ceil{\Psi(G)/\tau^{1/b}}}$.  Letting $W$ be the
maximum edge weight, we have that $\psi \cdot W$ upper bounds
$R_G(\tau)$.  We know that our algorithm computes the diameter
approximation in $\BO{\ell_{R_G(\tau)\log n}\log n}$ rounds. We will
now give an upper bound on $\ell_{R_G(\tau)\log n}$. By using results
from the theory of branching processes \cite{Dwass69} we can prove
that by removing all edges of weight $\geq \lfloor W/d \rfloor$ with
high probability the graph becomes disconnected and each connected
component has $\BO{\log n}$ nodes. (More details will be provided in
the full version of the paper.)  As a consequence, with high
probability any simple path in $G$ will traverse an edge of weight
$\geq \lfloor W/d \rfloor = \BOM{W}$ every $\BO{\log n}$ nodes. This implies
that a path of weight at most $R_G(\tau)\log n =
\BO{\ceil{\frac{\Psi}{\tau^{1/b}}}W\log n}$ has $\ell_{R_G(\tau)\log
  n} = \BO{\ceil{\frac{\Psi}{\tau^{1/b}}}\log^2 n}$ edges.  The
theorem follows by setting $\tau = \lceil n^{\epsilon'} \rceil$.
\end{proof}
For what concerns the comparison with $\Delta$-stepping, the analysis
in \cite{meyer2003delta} implies that under the linear-space
constraint, for a graph $G$ with random uniform weights a natural
MR-implementation of
$\Delta$-stepping requires $\BOM{\Psi(G)}$ rounds. Thus, by the above
corollary, if $G$ has bounded doubling dimension the round complexity
of our algorithm can be made smaller by a sublinear yet polynomial
factor which is a function of the available local space $M_L$.

We conclude this section by observing that in the case of very skewed
graph topologies and/or weight distributions under which the
hypotheses of Corollary~\ref{MR-complexity2} do not hold, the factor
$\ell_{R_G(\tau)\log n}$ in the round complexity of our algorithm can
be large, thus reducing the competitive advantage with respect to
\dstepping. We can somewhat overcome this limitation by imposing an
upper limit $\BO{n/\tau}$ (resp., $\BO{(n/\tau)\log n}$) to the number
of growing steps performed in each execution of PartialGrowth within
{\tt CLUSTER}$(\tau)$ (resp., PartialGrowth2 within {\tt
  CLUSTER2}$(\tau)$). It can be shown that, in this case, the round
complexity of our algorithm, for general graphs, becomes
$\BO{\min\{(n/\tau)\log n,\ell_{R_G(\tau) \log n}\}\log n}$ at the
expenses of an extra
$\BO{\lceil \ell_{R_G(\tau) \log n} /((n/\tau)\log n) \rceil}$ factor
in the approximation ratio. The argument revolves around the existence
of quasi-optimal clusterings of bounded unweighted depth. (More
details will be provided in the full version of this paper.)

\section{Experimental Analysis}
\label{sec-experiments}

\begin{table}
  \centering
  \begin{tabular}{l@{\hskip 1pt} r r r}
    \toprule
    Graph & $n$ & $m$ & $\Phi(G)$ \\
    \midrule
    {\tt roads-USA}~\cite{DIMACS} 
          & 23,947,347 & 29,166,673 & 55,859,820 \\
    {\tt roads-CAL}~\cite{DIMACS} 
          & 1,890,815 & 2,328,872 & 16,485,258 \\
    {\tt livejournal}$\star$~\cite{SNAP} 
          & 3,997,962 & 32, 681, 189 & 9.41 \\
    {\tt twitter}$\star$~\cite{LAW} 
          & 41,652,230 & 1,468,365,182 & 9.07 \\
    {\tt mesh(S)}$\star$ 
          & $S^2$ & $2S(S-1)$ & $\dagger$ \\
    {\tt R-MAT(S)}$\star$~\cite{Chakrabarti2004} 
          & $2^S$ & $16\cdot 2^S$ & $\dagger$ \\
    {\tt roads(S)} 
          & $\approx S\cdot 2.3\cdot 10^7$ & $\approx S\cdot 5.3\cdot 10^7$ & $\dagger$ \\
    \bottomrule
    \multicolumn{4}{l}{
    $\dagger$ the diameter depends on the size of
    the graph, controlled by $S>1$.
    }
  \end{tabular}
  \caption{Benchmark graphs: $n$ is the number of nodes, $m$ the
    number of edges and $\Phi(G)$ is the weighted diameter. Graphs marked
    with $\star$ have edge 
    weights following a random uniform distribution $\in (0, 1]$. The
    {\tt twitter} graph, originally directed, has been symmetrized.}
  \label{tab:benchmark-graphs}
\end{table}

Our experimental platform is a cluster of 16 nodes, each equipped with
a 4-core I7 processor and 18GB RAM, connected by a 10Gbit Ethernet
network. Our algorithms are implemented using Apache
Spark~\cite{SPARK}, a popular framework for big data computations that
supports the MapReduce abstraction adopted by our algorithms. The
experiments have been run on several graphs whose properties are
summarized in Table~\ref{tab:benchmark-graphs} and can be classified
as follows: a) road networks ({\tt roads-USA} and {\tt roads-CAL}), b)
social networks ({\tt livejournal} and {\tt twitter}), c) synthetic
graphs ({\tt mesh(S)}, {\tt R-MAT(S)}, and {\tt roads(S)}, where {\tt
  S} is a parameter controlling the size of the graph). The latter
class contains artificially generated graphs whose size can be made
arbitrarily large and whose topological properties reflect those of
the real networks in the first two classes. In particular, {\tt
  R-MAT(S)} are graphs with a power-law degree distribution and small
diameter~\cite{Chakrabarti2004}, and {\tt roads(S)} are graphs
obtained as the cartesian product of a linear array of $S$ nodes $S
>1$ and unit edge weights with {\tt roads-USA}. Finally, {\tt mesh(S)}
is an $S \times S$ square mesh included since it is a graph of known
doubling dimension $b=2$ for which the results of
Corollary~\ref{MR-complexity2} hold.  All road networks come with
original integer weights while for the other graphs, which are born
unweighted, we assigned uniform random edge weights in $(0,1]$ according
  to the approach commonly adopted in the literature.

We implemented a simplified version of our diameter approximation
algorithm, dubbed \cldiam, where, for efficiency, we used {\tt
  CLUSTER}, rather than {\tt CLUSTER2}, for computing the graph
decomposition. In fact, {\tt CLUSTER2} first runs {\tt CLUSTER} to
obtain an estimate of the radius, and then computes a second
decomposition which is instrumental to provide a theoretical bound to
the approximation factor, but which does not seem to provide a
significant improvement to the quality of the approximation in
practice.

As a second optimization, we ran {\tt CLUSTER} using an initial value
of $\Delta$ larger than the minimum edge weight, as was specified in
the pseudocode. We observe that by increasing the initial value of
$\Delta$, the round complexity improves since less doublings are
required before hitting the final value. On the other hand, setting
the initial value of $\Delta$ too large may yield a larger cluster
radius, possibly incurring a worse diameter approximation. To explore this
phenomenon, we experimented on {\tt mesh}$(S)$ with $S=2048$
and random
edge weights, such that an edge has weight $1$ with probability $0.1$
and $10^{-6}$ otherwise. With high probability, such a graph can be
completely covered using clusters that do not contain edges with
weight 1: including one of those edges in a cluster would make its
radius far bigger than it needs to be. We ran our algorithm with two
configurations. The first configuration started with $\Delta =
10^{-6}$ (i.e., the minimum edge weight) so to let the algorithm
tune itself to the final value $\Delta$ ($=6.4\cdot 10^{-5}$);
the second configuration started with an inital $\Delta$ equal to the
graph diameter ($\approx 2.004$) so that no doubling of $\Delta$ was
needed. The diameter approximation obtained by the second
configuration was about $2.5$ times larger than the actual diameter,
whereas the first configuration obtained an approximation ratio of
$1.0001$. A set of experiments (omitted here for brevity) showed that
a good initial guess for $\Delta$ is the average edge weight, which
reduces the round complexity without affecting the approximation
quality significantly. Therefore, all our experiments have been run
with this initial guess of $\Delta$.

Finally, in all of our experiments the parameter $\tau$ was set to
yield a number of nodes in the quotient graph $\le 100,000$. This
choice of $\tau$ was made to ensure that, for all instances, the final
diameter computation in the quotient graph would not dominate the
running time. 

The following paragraphs describe in detail the different sets of
experiments that we performed.

\vspace*{0.1cm}
\noindent
\emph{Comparison with the SSSP-based approximation.}  
Recall that an SSSP algorithm
 can be used to yield a 2-approximation to the diameter 
 by returning twice the weight of the heaviest shortest path.
Thus, we compared our
algorithm \cldiam with a Spark implementation
of the \dstepping SSSP algorithm (starting from a random node), 
which is the state of the art for parallel SSSP
and is in fact our only practical competitor on weighted graphs. 
In \dstepping, parameter $\Delta$ can be set to control the tradeoff
between parallel time (i.e., rounds in the MapReduce context) and total
work. For each graph, we tested  \dstepping
with several values of $\Delta$, selecting the value 
yielding the best running time. Since in MapReduce-like  
environments
the number of rounds
has a significant impact on the running time, not surprisingly, for all graphs the best value
of $\Delta$ was always the one minimizing the number of rounds.

The results of the comparison are summarized in
Table~\ref{tab:comparison-dstepping} and graphically represented in
Figures~\ref{fig:approximation}, \ref{fig:num-rounds},
and~\ref{fig:work}. In the table we report, for each graph, the diameter
approximation factor and the running time. Along with these, we also
report two additional measures, namely, the number of rounds and the
work (defined as the sum of node updates and messages generated), that
allow to compare the two algorithms in a more platform-independent
way. It has to be remarked that the approximation quality returned by
\cldiam\ on all benchmark graphs, a value always less that $1.4$, is
much better that the theoretical $\BO{\log^3n}$ bound.
Also, our algorithm is from about one to two orders of magnitude
faster than \dstepping, while featuring comparable approximation
ratios (Figure~\ref{fig:approximation}). As expected, the higher
performance of \cldiam is consistent with the fact that it requires
far less rounds than \dstepping, as shown in
Figure~\ref{fig:num-rounds}.

\newcommand{\samount}{10pt}
\begin{table*}[t]
  \centering\footnotesize
  \begin{tabular}{l@{\hskip 1pt} 
                  r@{\hskip \samount} r@{\hskip \samount} 
                  r@{\hskip \samount} r@{\hskip \samount} 
                  r@{\hskip \samount} r@{\hskip \samount} 
                  r@{\hskip \samount} r}
    \toprule
    & \multicolumn{2}{c}{approximation}
    & \multicolumn{2}{c}{time}
    & \multicolumn{2}{c}{rounds}
    & \multicolumn{2}{c}{work} \\
    graph
    & \cldiam & \dstepping & \cldiam & \dstepping
    & \cldiam & \dstepping & \cldiam & \dstepping \\
    \midrule

{\tt roads-USA  } &          1.26 &              1.09 &     158 &            14,982 &      74 &            11,268 & 4.22\ord{8} &          1.35\ord{11} \\
{\tt roads-CAL  } &          1.25 &              1.22 &      13 &          917 &      16 &             2,639 & 1.70\ord{7} &          2.27\ord{9}~ \\
{\tt mesh       } &          1.23 &              1.30 &      46 &             1,239 &      70 &             2,997 & 1.13\ord{8} &          1.58\ord{8}~ \\
{\tt livejournal} &          1.22 &              1.10 &      19 &               
 74 &       7 &                42 & 1.97\ord{8} &          4.81\ord{8}~ \\
{\tt twitter    } &          1.19 &              1.32 &     236 &               
601 &       5 &                35 & 4.71\ord{9} &          1.20\ord{10} \\
{\tt R-MAT(24)  } &          1.33 &              1.02 &     144 &             1,493 &       4 &                41 & 7.45\ord{8} &          4.20\ord{9}~ \\

    \bottomrule
  \end{tabular}
  \caption{\cldiam vs \dstepping.
    For each benchmark, the table shows the running time (in
    seconds), the approximation ratio, the number of rounds, and the
    work of the two algorithms.
    The approximation ratio is expressed in terms of a lower bound to
    the true diameter computed by running the sequential SSSP
    algorithm multiple times, each time starting from the farthest
    node reached by the previous run.
  }
    \label{tab:comparison-dstepping}
\end{table*}

\if \onepagefig0
\begin{figure}
  \centering
  \includegraphics[width=.45\columnwidth]{plots/approximation}
  \caption{Approximation ratio of \cldiam and
    \dstepping.
    \vspace{40pt}}
  \label{fig:approximation}
\end{figure}
\else
\fi

\if \onepagefig0
\begin{figure}
  \centering
  \includegraphics[width=.5\columnwidth]{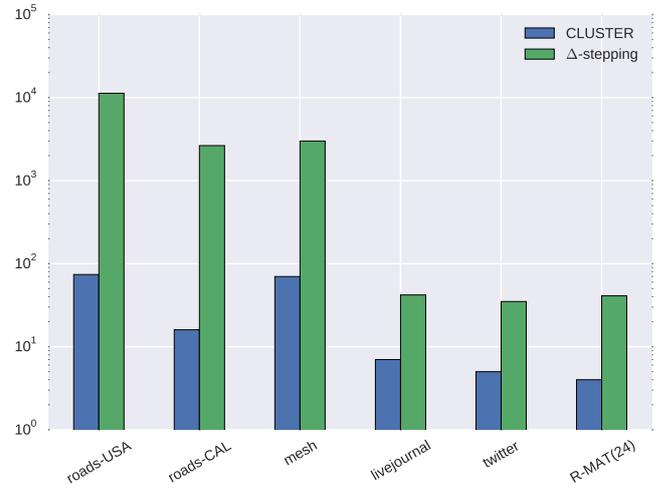}
  \caption{Number of rounds required by \cldiam and
    \dstepping. The scale is logarithmic.}
  \label{fig:num-rounds}
\end{figure}
\else
\fi

For what concerns the work, the better performance of \cldiam, as
shown in Figure~\ref{fig:work}, is mainly due to the smaller number
relaxations with respect to \dstepping. Indeed, \cldiam explores paths
only up to a limited depth, whereas \dstepping needs to run until all
the nodes are labeled with the optimal distance from the source. In
fact, \dstepping could limit the amount of relaxations by using a
smaller $\Delta$, but in doing so it would incur an increase of the
number of rounds, hence exhibiting worse performance.
The gap between \dstepping and
\cldiam in terms of both number of rounds and work 
suggests that our algorithm is likely to remain
competitive on other distributed-memory platforms employing 
programming frameworks alternative to MapReduce.

\if \onepagefig0
\begin{figure}
  \centering
  \includegraphics[width=.5\columnwidth]{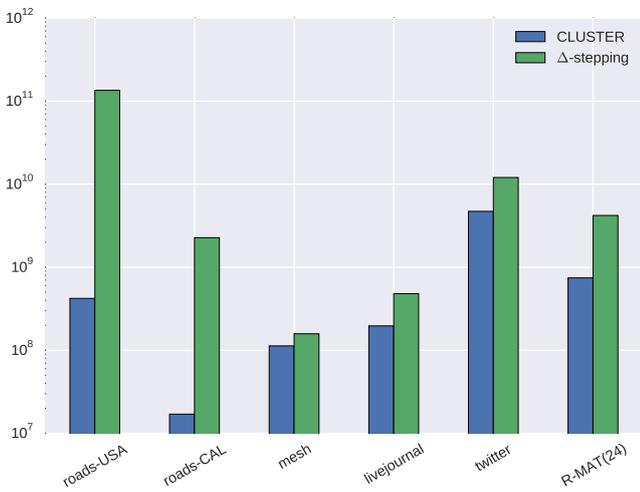}
  \caption{Work performed by \cldiam and \dstepping. The scale
    is logarithmic.}
  \label{fig:work}
\end{figure}
\else
\fi

We remark that the experiments reported in
Table~\ref{tab:comparison-dstepping} involve graphs of moderate size
for which, not surprisingly, much better running times can be obtained
on a single machine equipped with sufficient main memory.  In fact,
the purpose of those experiments was not to attain best absolute
performance on the individual graphs but, rather, to compare the
relative performance of \cldiam with the one of \dstepping.
%
Since it is conceivable to expect that the relative performance of two
algorithms does not change as the graphs size grows, performing this
comparison on much larger graphs would have only encumbered the
experimental work without changing the overall outcome. Nevertheless,
we performed further experiments, reported in the next subsection, to
provide evidence that our algorithm scales well with respect to
machine and input size.

\vspace*{0.1cm}
\noindent
\emph{Scalability.}
To check the scalability of \cldiam with respect to the number of
machines, we ran it using 2, 4, 8, and 16 machines on {\tt R-MAT(26)}
and {\tt roads(3)} which have approximately the same number of nodes
but have topologies of different nature. The results are reported in Figure
\ref{fig:scalability} which shows that, for both graphs, 
the algorithm exhibits excellent scalability. 

Finally, we ran \cldiam on {\tt R-MAT(29)} and {\tt roads(32)}, which
are much larger graphs than those employed in the experiments reported
in Table~\ref{tab:comparison-dstepping}, and for which the running
time of \dstepping would be impractically high on our platform.  The
running times on 16 machines are shown in Table~\ref{tab:big-graphs}.
Considering that the size of {\tt R-MAT(29)} (resp., {\tt roads(32)})
is 32 (resp., about 57) times larger than the size of {\tt R-MAT(24)}
(resp., {\tt roads-USA}) the experiment shows that \cldiam's
performance scales well with the graph size on the same machine
configuration. A slight penalty in the running time on large graphs is
to be expected due to the increased number of interactions with the
disks occurring in each machine because of the large graph size.

\vspace*{0.5cm} 
Altogether, the experiments suggest that our algorithm can be
effectively employed to provide a good estimate of the diameter of
huge graphs on sufficiently large clusters of commodity processors.

\if \onepagefig0
\begin{figure}
  \centering
  \includegraphics[width=.5\columnwidth]{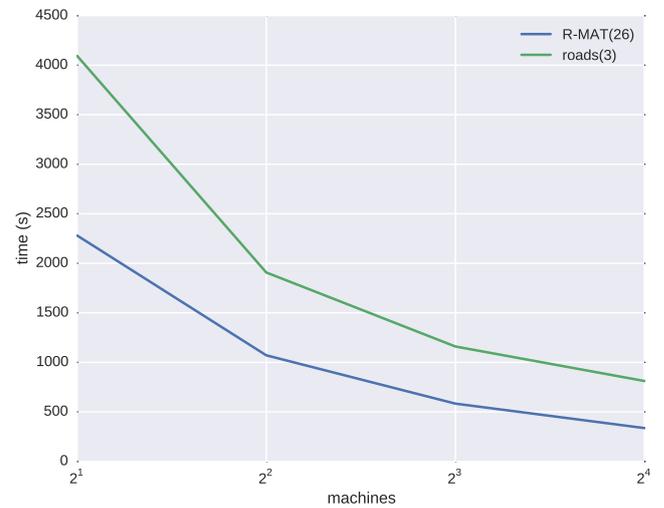}
  \caption{Scalability of \cldiam wrt the number of machines.}
  \label{fig:scalability}
\end{figure}
\else
\fi

\begin{table}
  \centering
  \begin{tabular}{l r}
    \toprule
    Graph & time (seconds) \\
    \midrule
    {\tt R-MAT(29)} & 6218 \\
    {\tt roads(32)} & 14054 \\
    \bottomrule
  \end{tabular}
  \caption{Experiments on big graphs}
  \label{tab:big-graphs}
\end{table}

\if \onepagefig1
\break
\newgeometry{textwidth=512pt,marginparwidth=0pt,marginparsep=0pt}

\begin{figure}
  \centering
  \begin{minipage}{.48\linewidth}
    \centering
    \includegraphics[width=\columnwidth]{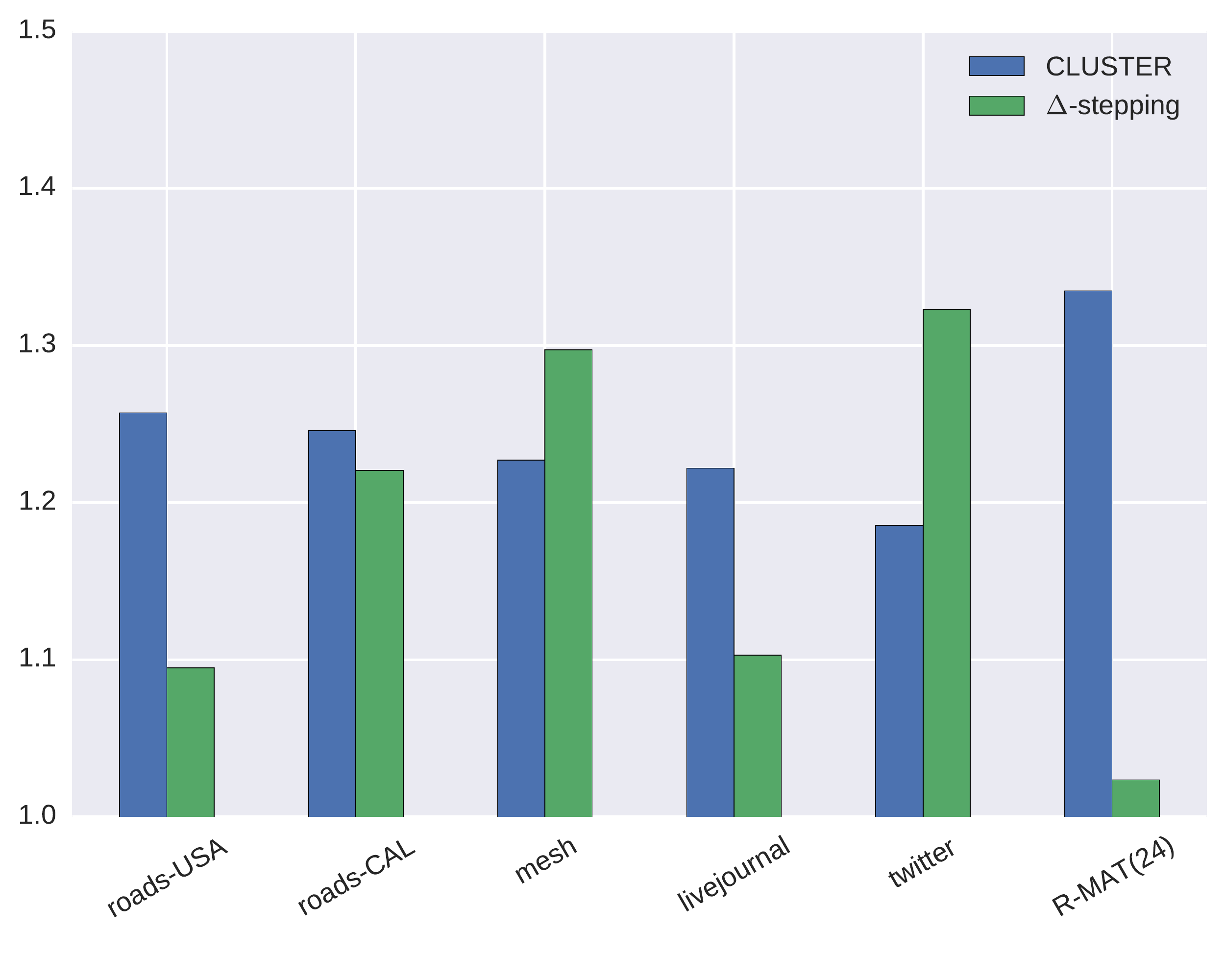}
    \caption{Approximation ratio of \cldiam and
      \dstepping.}
    \label{fig:approximation}
  \end{minipage}
  \hfill
  \begin{minipage}{.48\linewidth}
    \centering
    \includegraphics[width=\columnwidth]{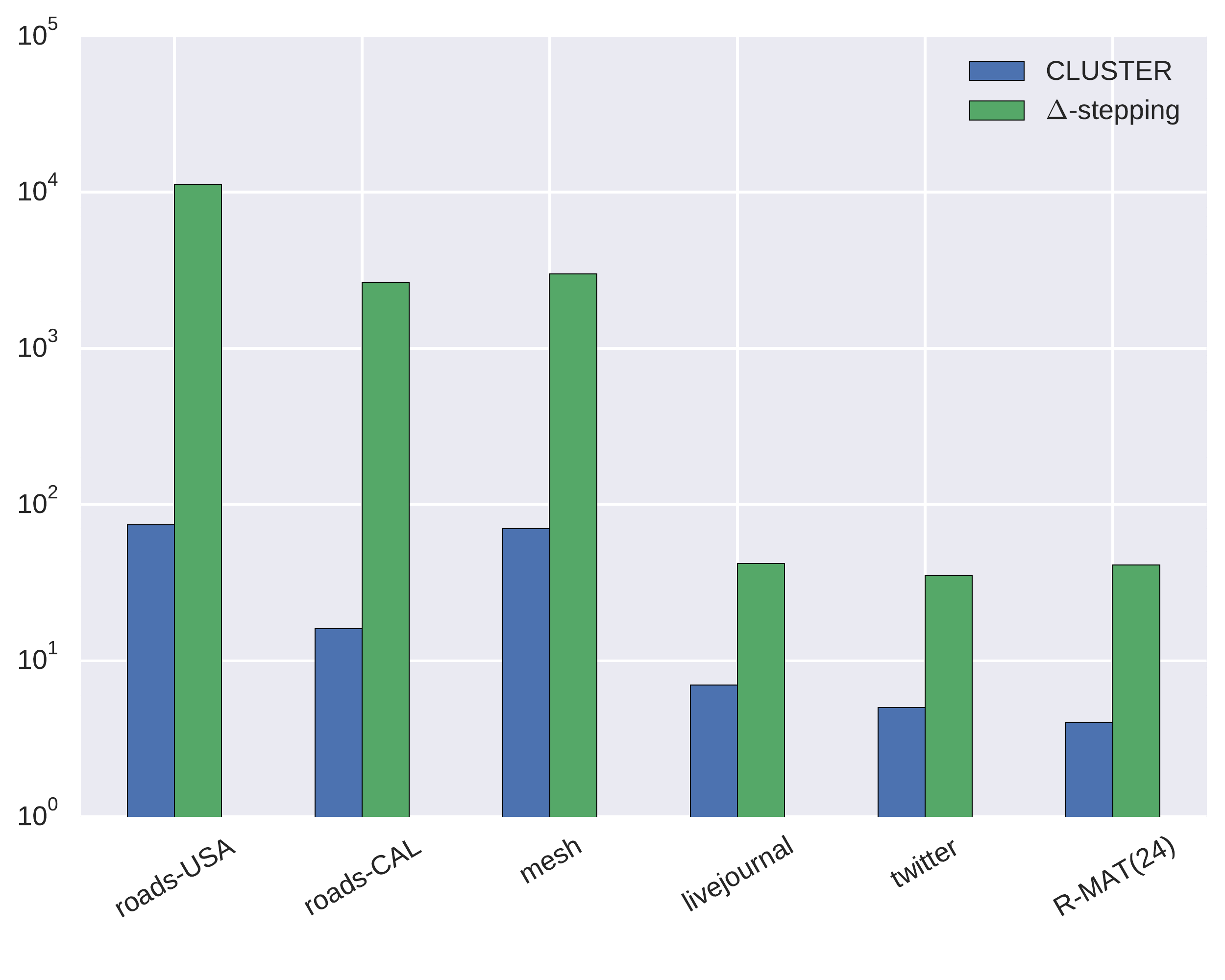}
    \caption{Number of rounds required by \cldiam and
      \dstepping. The scale is logarithmic.}
    \label{fig:num-rounds}
  \end{minipage}

  \vspace{64pt}

  \begin{minipage}{.48\linewidth}
    \centering
    \includegraphics[width=\columnwidth]{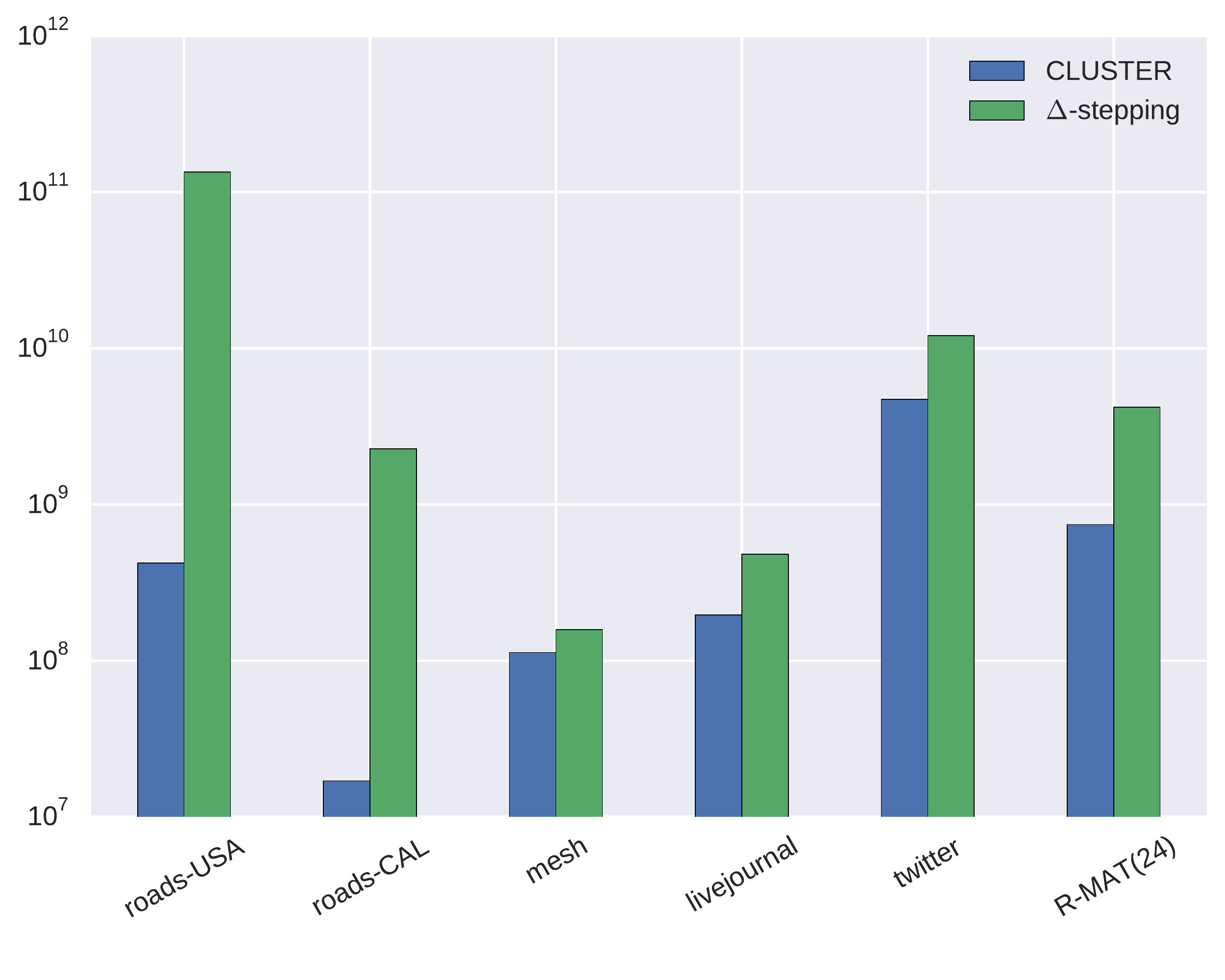}
    \caption{Work performed by \cldiam and \dstepping. The scale
      is logarithmic.}
    \label{fig:work}
  \end{minipage}
  \hfill
  \begin{minipage}{.48\linewidth}
    \centering
    \includegraphics[width=\columnwidth]{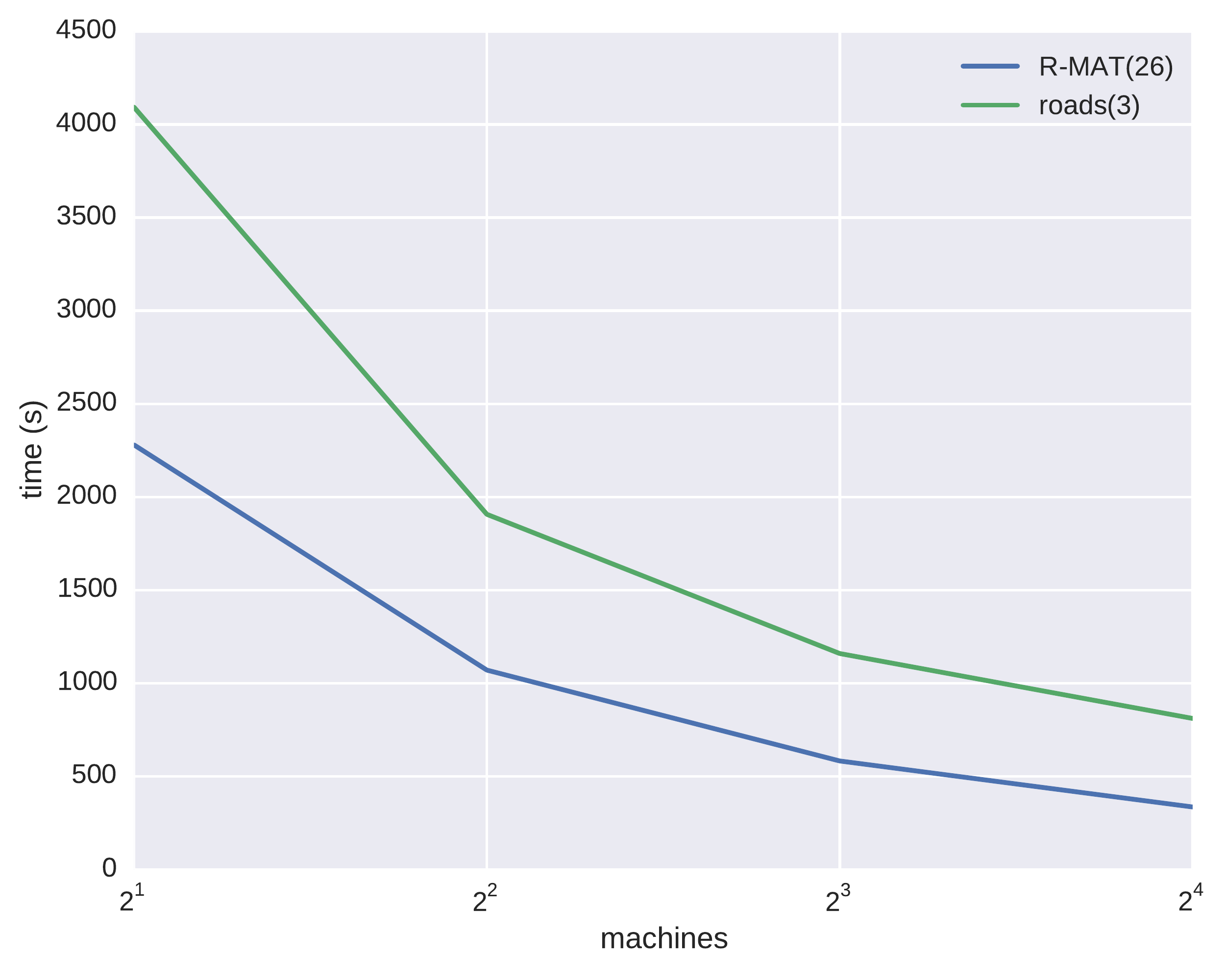}
    \caption{Scalability of \cldiam wrt the number of machines.}
    \label{fig:scalability}
  \end{minipage}
\end{figure}

\restoregeometry
\else
\fi

\if \onepagefig1
\newgeometry{top=101.40665pt}
\else
\fi


\begin{thebibliography}{AGGM06}

\bibitem[AGGM06]{AbrahamGGM2006}
I.~Abraham, C.~Gavoille, A.V. Goldberg, and D.~Malkhi.
\newblock Routing in networks with low doubling dimension.
\newblock In {\em Proc. IEEE-ICDCS}, 2006.

\bibitem[BRV11]{BoldiRV11}
P.~Boldi, M.~Rosa, and S.~Vigna.
\newblock Hyper{ANF}: approximating the neighbourhood function of very large
  graphs on a budget.
\newblock In {\em Proc. WWW}, pages 625--634, 2011.

\bibitem[CGLM12]{CrescenziGLM12}
P.~Crescenzi, R.~Grossi, L.~Lanzi, and A.~Marino.
\newblock On computing the diameter of real-world directed (weighted) graphs.
\newblock In {\em Proc. SEA}, pages 99--110, 2012.

\bibitem[CLR{\etalchar{+}}14]{ChechikLRSTW14}
S.~Chechik, D.~Larkin, L.~Roditty, G.~Schoenebeck, R.E.Tarjan, and V.V.
  Williams.
\newblock Better approximation algorithms for the graph diameter.
\newblock In {\em Proc. ACM-SIAM SODA}, pages 1041--1052, 2014.

\bibitem[CLRS09]{CormenLRS09}
T.H. Cormen, C.E. Leiserson, R.L. Rivest, and C.~Stein.
\newblock {\em Introduction to Algorithms. Third Edition}.
\newblock The MIT Press, 2009.

\bibitem[Coh00]{Cohen00}
E.~Cohen.
\newblock Polylog-time and near-linear work approximation scheme for undirected
  shortest paths.
\newblock {\em J. ACM}, 47(1):132--166, 2000.

\bibitem[CPPU15]{CeccarelloPPU15}
M.~Ceccarello, A.~Pietracaprina, G.~Pucci, and E.~Upfal.
\newblock Space and time efficient parallel graph decomposition, clustering,
  and diameter approximation.
\newblock In {\em Proc. ACM-SPAA}, pages 182--191, 2015.

\bibitem[CZF04]{Chakrabarti2004}
Deepayan Chakrabarti, Yiping Zhan, and Christos Faloutsos.
\newblock R-mat: A recursive model for graph mining.
\newblock In {\em Proc. SIAM-SDM}, volume~4, pages 442--446, 2004.

\bibitem[DG08]{DeanG08}
J.~Dean and S.~Ghemawat.
\newblock Mapreduce: simplified data processing on large clusters.
\newblock {\em CACM}, 51(1):107--113, 2008.

\bibitem[DIM]{DIMACS}
Datasets of the {9th DIMACS Implementation Challenge - Shortest Paths}.
\newblock {\tt http://www.dis.uniroma1.it/challenge9/}.

\bibitem[Dwa69]{Dwass69}
M.~Dwass.
\newblock The total progeny in a branching process and a related random walk.
\newblock {\em Journal of Applied Probability}, 6(3):682--686, 1969.

\bibitem[GSZ11]{GoodrichSZ11}
M.T. Goodrich, N.~Sitchinava, and Q.~Zhang.
\newblock Sorting, searching, and simulation in the {MapReduce} framework.
\newblock In {\em Proc. ISAAC}, pages 374--383, 2011.

\bibitem[KS97]{klein_randomized_1997}
P.~Klein and S.~Subramanian.
\newblock A randomized parallel algorithm for single-source shortest paths.
\newblock {\em Journal of Algorithms}, 25(2):205--220, 1997.

\bibitem[KSV10]{KarloffSV10}
H.~Karloff, S.~Suri, and S.~Vassilvitskii.
\newblock A model of computation for mapreduce.
\newblock In {\em Proc. ACM-SIAM SODA}, pages 938--948, 2010.

\bibitem[LAW]{LAW}
{Laboratory for web algorithmics datasets}.
\newblock {\tt http://law.di.unimi.it/webdata/twitter-2010/}.

\bibitem[Mey08]{Meyer08}
U.~Meyer.
\newblock On trade-offs in external-memory diameter-approximation.
\newblock In {\em Proc. SWAT}, pages 426--436, 2008.

\bibitem[MS03]{meyer2003delta}
U.~Meyer and P.~Sanders.
\newblock $\delta$-stepping: a parallelizable shortest path algorithm.
\newblock {\em Journal of Algorithms}, 49(1):114--152, 2003.

\bibitem[PPR{\etalchar{+}}12]{PietracaprinaPRSU12}
A.~Pietracaprina, G.~Pucci, M.~Riondato, F.~Silvestri, and E.~Upfal.
\newblock Space-round tradeoffs for {MapReduce} computations.
\newblock In {\em Proc. ACM-ICS}, pages 235--244, 2012.

\bibitem[SNA]{SNAP}
{Stanford Large Network Dataset Collection}.
\newblock {\tt http://snap.stanford.edu/data}.

\bibitem[SPA]{SPARK}
{Spark: Lightning-fast cluster computing.}
\newblock {\tt http://spark.apache.org}.

\end{thebibliography}

\newcommand{\etalchar}[1]{$^{#1}$}

\end{document}